\newfont{\eufm}{eufm10 scaled\magstep1}
\newcommand{\cB}{\mathcal{B}}
\newcommand{\cC}{\mathcal{C}}
\newcommand{\cH}{\mathcal{H}}
\newcommand{\cL}{\mathcal{L}}
\newcommand{\cF}{\mathcal{F}}
\newcommand{\cR}{\mathcal{R}}
\newcommand{\cS}{\mathcal{S}}
\newcommand{\cV}{\mathcal{V}}
\newcommand{\cW}{\mathcal{W}}
\newcommand{\bbN}{\mathbb{N}}
\newcommand{\bbZ}{\mathbb{Z}}
\newcommand{\bbC}{\mathbb{C}}
\def\dres{\partial{\rm Res}}
\def\ord{\rm ord}
\def\KdV{\mathbf{KdV}}
\def\gcrd{\rm gcrd}
\def\kdv{\rm kdv}
\def\para{\vspace{1.5 mm}}
\newtheorem{thm}{Theorem}[section]
\newtheorem{lem}[thm]{Lemma}
\newtheorem{cor}[thm]{Corollary}
\newtheorem{prop}[thm]{Proposition}
\newtheorem{defi}[thm]{Definition}
\newtheorem{rem}[thm]{Remark}
\newtheorem{ex}[thm]{Example}
\newtheorem{alg}[thm]{Algorithm}
\begin{document}

\begin{frontmatter}

\title{Factorization of KdV Schr\" odinger operators\\ using differential subresultants}

\author{Juan J. Morales-Ruiz}
\address{Dpto. de Matem\' atica Aplicada. E.T.S. Edificaci\' on. Avda. Juan de Herrera 6.\\
Universidad Polit\' ecnica de Madrid. 28040, Madrid. Spain}
\ead{juan.morales-ruiz@upm.es}

\author{Sonia L. Rueda}
\address{Dpto. de Matem\' atica Aplicada. E.T.S. Arquitectura. Avda. Juan de Herrera 4.\\
Universidad Polit\' ecnica de Madrid. 28040, Madrid. Spain}
\ead{sonialuisa.rueda@upm.es}

\author{Maria-Angeles Zurro}
\address{Dpto. de Matem\' aticas. Facultad de Ciencias. Ciudad Universitaria de Cantoblanco.\\
Universidad Aut\'onoma de Madrid. E-28049 Madrid. Spain}
\ead{mangeles.zurro@uam.es}



\begin{abstract}
We address the classical factorization problem of a one dimensional Schr\"odinger operator $-\partial^2+u-\lambda$, for a stationary potential $u$ of the KdV hierarchy but, in this occasion, a "parameter" $\lambda$. Inspired by the more effective approach of Gesztesy and Holden to the "direct" spectral problem, we give a symbolic algorithm by means of differential elimination tools to achieve the aimed factorization. Differential resultants are used for computing spectral curves, and differential subresultants to obtain the first order common factor. To make our  method fully effective, we design a symbolic  algorithm to compute the integration constants of the KdV hierarchy, in the case of KdV potentials that become  rational under a Hamiltonian change of variable. Explicit computations are carried for Schr\"odinger operators with solitonic potentials.
\end{abstract}

\begin{keyword}
Schr\" odinger operator \sep factoriazation of ODOs \sep  differential resultant \sep differential subresultant \sep spectral curve 
\MSC[2010]  13P15, 12H05
\end{keyword}

\end{frontmatter}



\section{Introduction}\label{sec-Introduction}

This paper addresses the effective factorization of the Schrodinger operator
\begin{equation}
L- \lambda = - \partial^2 + u - \lambda
\end{equation}
for a stationary potential $ u $ in a complex variable, say $ x $, and $ \lambda $ a parameter over the field of coefficients. It is well known that whenever the potential satisfies one of the differential equations of the Korteweg de Vries (KdV) hierarchy, this problem is intimately related to the existence of a plane  algebraic curve $\Gamma$, the spectral curve associated to the operator $L$. In 1923, J.L. Burchnall and  T.W. Chaundy \cite{BC} established a corres\-pondence between commuting differential operators and algebraic curves.
They discovered the {\sf spectral curve}, defined by the so called Burchnall and Chaundy (BC) polynomial. This discovery allowed an algebro-geometric approach to handling the direct and inverse spectral problems for the {\sf finite-gap} operators, with the spectral data being encoded in the spectral curve and an associated line bundle \cite{K77}. In particular, KdV Schr\" odinger operators (a special case of finite-gap operators) can be treated by the methods in \cite{K77}, but in this paper we present a different approach to the direct spectral problem inspired by the more effective treatment of Gesztesy and Holden in \cite{GH}. We advice for instance \cite{GH} for a historic introduction on the subject.

\para

Classically, the spectral curve was computed using a Lenard-type differential recursion (see \cite{GH}), where arbitrary integration constants appeared at each step of the iterative process. In \cite{GH} Theorem D.1, the intimate relationship between these integration constants and $\Gamma$ is shown.  Our approach to the problem of computation of constants has the goal of designing an algorithm that depends only on the potential $u$, but not directly on the spectral curve.
For this purpose we describe the flag structure that the constants create, see Section \ref{subsec-flag}.
In the case of potentials that become rational under a Hamiltonian change of variable \cite{AMW}, we have been able to design the aimed algorithm.

\para

Based on Goodearl's theoretical results \cite{Good}, we describe the centralizer of $L$  for a KdV potential $u$. In other words, we determine the essential odd order operator $A_{2s+1}$ of the centralizer of $L$, that together with $L$ generates the centralizer as a $\bbC$-algebra $\bbC[L,A_{2s+1}]$. The potential $u$  satisfies a fixed $\KdV_s$ equation
$$
\KdV_s=\kdv_s+c_1\kdv_{s-1}+\cdots +c_s\kdv_0 =0,
$$
with corresponding integration constants $c_1,\ldots ,c_s$ in $\mathbb{C}$. Thus for a fixed potential $u$, the algorithmic determination of the operator  $A_{2s+1}$ relies on the algorithmic computation of the constants $c_i$.

Once we have explicitly obtained the operator $ A_ {2s + 1} $, a defining equation  for $\Gamma$ can be computed. In fact, E. Previato \cite{Prev} used differential resultants to compute spectral curves,  opening the door to symbolic computation techniques. The use of these techniques did not transcend so far \cite{GH}, \cite{Mir} and their defining  polynomials are commonly computed as characteristic polynomials \cite{GH}, \cite{Brez3}, \cite{Mir}.  Differential resultants for ordinary differential operators were defined in the 90's by Berkovich and Tsirulik \cite{BT} and studied by Chardin \cite{Ch}, who also defined the differential subresultant sequence;  see \cite{McW} for a recent report on the subject.

\para

We present a new symbolic algorithm for the factorization of a KdV Schr\" odinger operator $L-\lambda$ over the field $K(\Gamma)$  of its spectral curve $\Gamma$ using differential subresultants. There are other factorization algorithms for linear ordinary differential operators in the literature, as \cite{BrP}, \cite{VanHoeij}, \cite{Singer}. But we benefit from the use of the first subresultant since it provides a differential  algebraic formula only in terms of the potential $ u $ and the computed constants. In this way the factorization obtained for $ L- \lambda $ is written as
\[
L-\lambda
=(-\partial-{\phi})(\partial-{\phi}) \leqno{(\star)}
\]
with ${\phi}$  a quotient of two determinants of matrices with entries differential polynomials in $u$. Whenever the spectral curve admits a global parametrization, the algebraic framework that justifies the correctness of the algorithms allows to develop a parametric version of $(\star)$. In the examples of Section \ref{sec-Ejemplos} we illustrate some of these cases.

\para

A very important requirement in this work is to treat $\lambda$ as a parameter. The differential operator $L-\lambda$ is treated first as an operator with coefficients in the field $K(\lambda )$; then, when the spectral curve $\Gamma$ is considered, as a differential operator with coefficients in the field $K(\Gamma)$ of rational functions on $\Gamma$.
Our symbolic factorization structure allows the specialization process to points $(\lambda_0,\mu_0)$ on $\Gamma$, recovering the classical factorization of $L-\lambda_0$ at each point $(\lambda_0,\mu_0)$ of $\Gamma_s$, see  \cite{GH}.
Another approach to the factorization is carried by means of Darboux transformations and the raising and lowering operators $A^+$ and $A^-$, but with this approach one previously needs to compute a set of solutions for a finite set of energy levels, see \cite{AMW} and the references therein.

\para

The paper is organized as follows.
In Section \ref{sec-Preliminaries}, we construct the KdV hierarchy and define differential subresultants, reviewing its main properties. Section \ref{sec-constants} contains our  algorithm for computation of the integration constants of the KdV hierarchy. Then in Section \ref{sec-spectralcurves}  we describe the centralizer of $L$ and  compute the operator $A_{2s+1}$. We are ready to review Previato's Theorem, applying it to the computation of the spectral curve of the Lax pair $\{L,A_{2s+1}\}$. Section \ref{sec-KdV factors} contains our factorization algorithm for $L-\lambda$ as an operator in $K(\Gamma)[\partial]$.  In Section \ref{sec-Ejemplos}, we apply our algorithms to three special families of solitons. A parametric version  of the factors is also included for those examples.

\para

We implemented the algorithm for the computation of constants and the factorization algorithm using Maple 18. We used these implementations to compute the examples in Section \ref{sec-Ejemplos}.

\section{Notation}\label{sec-notation}

\para

We establish some notation to be used throughout the whole manuscript.
\para

Let $\bbN$ be the set of positive integers including $0$.
For concepts in differential algebra we refer the reader to \cite{CH}, \cite{VPS} or \cite{Morales}.
Let $K$ be a differential field of characteristic zero with derivation $\partial$, whose field of constants $C$ is algebraically closed.
Let us consider algebraic variables $\lambda$ and $\mu$ with respect to $\partial$. Thus $\partial \lambda=0$ and $\partial \mu=0$ and
we can extend the derivation $\partial$ of $K$ to the polynomial ring $K[\lambda, \mu]$ by
\begin{equation}\label{eq-derpol}
\partial \left(\sum a_{i,j} \lambda^i\mu^j\right)=\sum \partial (a_{i,j}) \lambda^i\mu^j,\,\,\, a_{i,j}\in K.
\end{equation}
Hence $(K[\lambda, \mu],\partial)$ is a differential ring whose ring of constants is $C[\lambda, \mu]$.

\para

 Given a differential commutative ring $E$ with derivation $\partial$,
let us denote by $E[\partial]$ the ring of differential operators with coefficients in $E$ and commutation rule
\[[\partial,a]=\partial a-a\partial=\partial(a), a\in E,\]
where $\partial a$ denotes the product in the noncommutative ring  $E[\partial]$ and $\partial (a)$ is the image of $a$ by the derivation map. The ring of pseudo-differential operators in $\partial$ will be denoted by $E[\partial^{-1}]$  (see \cite{Good})
\[E[\partial^{-1}]=\left\{\sum_{i=-\infty}^n a_i\partial^i\mid a_i\in E, n\in\bbZ\right\},\]
where $\partial^{-1}$ is the inverse of $\partial$ in $E[\partial^{-1}]$, $\partial^{-1}\partial=\partial\partial^{-1}=1$.

\section{Formal KdV Schr\" odinger operators}\label{sec-Preliminaries}

Let us consider a differential indeterminate $u$ over $C$. We will call {\it formal Schr\" odinger operator} to $L(u)=-\partial^2+u$ with coefficients in the ring of differential polynomials $$C\{u\}=C[u,u',u'',\ldots]$$ where $u'$ stands for $\partial (u)$ and $u^{(n)}=\partial^n (u)$, $n\in\bbN$.

\para

\subsection{KdV polynomials and their Lax pair representations}\label{sec-KdV}

In this section we will work with the formal  Schr\" odinger operator $L=L(u)$.
In a convenient way to be used in this paper, we review well known algorithms to compute the differential polynomials in $u$ of the KdV hierarchy and the family of differential operators of its Lax representation. This was studied for the first time in the paper \cite{GD}. We follow the normalization in \cite{GH}, see also \cite{Olver} for other presentations.

\para

Let us consider the pseudo-differential operator
\begin{equation}\label{eq-recursion}
\cR=-\frac{1}{4}\partial^2+u+\frac{1}{2}u'\partial^{-1}\mbox{ and its adjoint }
\cR^*=-\frac{1}{4}\partial^2+u-\frac{1}{2}\partial^{-1}u'.
\end{equation}
Observe that $\cR^*=\partial^{-1}\cR\partial$.
The operator $\cR^*$ is a recursion operator of the KdV equation (see \cite{Olver}, p. 319).
Applying the recursion operator $\cR$, we define:
\begin{equation}\label{eq-kdv}
\kdv_0:=u',\,\,\, \kdv_n:=\cR(\kdv_{n-1}),\mbox{ for }n\geq 1.
\end{equation}
Applying $\cR^*$ we define:
\begin{equation}\label{eq-fn}
v_0:=1,\,\,\, v_n:=\cR^*(v_{n-1}),\mbox{ for }n\geq 1.
\end{equation}
Hence for $n\in\bbN$ it holds
\begin{equation}\label{eq-vkdv}
2\partial(v_{n+1})=\kdv_n.
\end{equation}

We will prove next that for all $n$, $\kdv_n$ and $v_n$ are differential polynomials in $u$, elements of $C\{u\}$. The proof is similar to the one of \cite{Olver}, Theorem 5.31 but we include details for completion,  and due to its importance for their symbolic computation, see Section \ref{sec-formal ex}.
We will call the differential polynomials $\kdv_n$ the {\it KdV differential polynomials}.

\begin{lem}\label{lem-kdv}
The formulas for  $\kdv_n$ and $v_n$ give differential polynomials in $C\{u\}$.
\end{lem}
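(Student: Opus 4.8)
The plan is to treat each $\kdv_n$ and $v_n$ a priori as a (possibly \emph{nonlocal}) expression in $u$, built from differential polynomials together with the formal antiderivative $\partial^{-1}$, and to prove by induction on $n$ that in fact no genuine $\partial^{-1}$ survives. The key simplification is that (\ref{eq-vkdv}) is already available as a formal identity: since $\kdv_n=2\partial(v_{n+1})$, once I know $v_{n+1}\in C\{u\}$ I get $\kdv_n=2v_{n+1}'\in C\{u\}$ for free (with $\kdv_0=2\partial(\tfrac12u)=u'$). Thus the whole statement reduces to the single assertion that $v_n\in C\{u\}$ for every $n$.

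The base cases are explicit, $v_0=1$ and $v_1=\tfrac12 u$. For the inductive step I would write out $v_{n+1}=\cR^{*}(v_n)=-\tfrac14 v_n''+u\,v_n-\tfrac12\,\partial^{-1}(u'v_n)$; assuming $v_n\in C\{u\}$, the first two terms are differential polynomials, so everything hinges on whether $u'v_n$ is the total derivative of an element of $C\{u\}$. I would emphasize that this exactness does \emph{not} follow by formal manipulation of the recursions: every attempt to integrate $u'v_n$ by parts, or to use the operator identity $\partial\cR^{*}=\cB$ with $\cB:=-\tfrac14\partial^{3}+u\partial+\tfrac12 u'$ (which only gives $\partial(v_{n+1})=\cB(v_n)$), merely reproduces the term $\partial^{-1}(u'v_n)$. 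A genuinely new input is needed.

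To supply it I would invoke the Euler (variational-derivative) operator $E=\sum_{k\ge0}(-\partial)^{k}\,\tfrac{\partial}{\partial u^{(k)}}$, together with the two standard facts that $\ker E=\partial(C\{u\})\oplus C$ and that $E(PQ)=D_{P}^{*}(Q)+D_{Q}^{*}(P)$, where $D_{P}=\sum_k\tfrac{\partial P}{\partial u^{(k)}}\partial^{k}$ is the Fréchet derivative and $D_P^{*}$ its formal adjoint. Taking $Q=u'$ (so $D_{u'}=\partial$ and $D_{u'}^{*}=-\partial$) and using the chain rule $D_{v_n}(u')=\partial(v_n)$, one obtains
$$E(u'v_n)=D_{v_n}^{*}(u')-D_{v_n}(u'),$$
which vanishes precisely when $D_{v_n}$ is self-adjoint, i.e. when $v_n$ is a gradient (variational derivative). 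I would therefore strengthen the induction hypothesis to include that each $v_n$ is a gradient; the base cases $v_0=E(u)$ and $v_1=E(u^{2}/4)$ qualify, and granting the step, $E(u'v_n)=0$ forces $u'v_n\in\partial(C\{u\})$ (the $\oplus C$ ambiguity is absent since $u'v_n$ carries no $u$-free term), whence $v_{n+1}\in C\{u\}$ and the induction closes.

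The main obstacle is the inductive step for this strengthened hypothesis, namely that $\cR^{*}$ sends gradients to gradients (equivalently, that self-adjointness of $D_{v_n}$ is inherited by $D_{v_{n+1}}$). This is exactly the heredity, or bi-Hamiltonian compatibility, of the pair $(\partial,\cB)$, and is the one genuinely nontrivial computation in the argument. An alternative closer to Olver's Theorem 5.31 is to bypass the gradient language and prove $u'v_n\in\partial(C\{u\})$ by directly constructing its antiderivative inside $C\{u\}$, exploiting that $\cB$ is skew-adjoint so that $f\,\cB(g)+g\,\cB(f)$ is an \emph{explicit} total derivative; I expect the delicate point there to be organizing this bilinear bookkeeping so that it closes the induction.
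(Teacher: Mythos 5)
Your reduction is the right one --- everything hinges on showing that $u'v_n$ (equivalently $u\,\kdv_{n-1}$) is a total derivative in $C\{u\}$ --- and your Euler-operator computation $E(u'v_n)=D_{v_n}^{*}(u')-D_{v_n}(u')$ is correct, as are the base cases $v_0=E(u)$, $v_1=E(u^2/4)$. But the argument does not close: the strengthened induction hypothesis you introduce (each $v_n$ is a variational gradient) has an inductive step --- that $\cR^{*}$ carries gradients to gradients, i.e.\ that $D_{v_{n+1}}$ is again self-adjoint --- which you name but do not prove and do not reduce to a citable statement. That step is not a routine verification: it is the hereditary/bi-Hamiltonian compatibility of the pair $(\partial,\cB)$, whose proof (the Lenard-scheme theorem in Olver's treatment of bi-Hamiltonian systems) requires checking a Jacobi-type compatibility identity for $\cB$ and is at least as deep as the lemma itself. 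As written, the proposal is a correct reduction of the lemma to a harder (if standard) theorem, not a proof.

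It is also worth flagging that your opening claim --- that exactness of $u'v_n$ ``does not follow by formal manipulation of the recursions'' --- is refuted by the paper's own proof, which is exactly such a manipulation and needs no variational machinery. The paper uses the adjointness relation $u\,\cR^{n-1}(u')=u'\,(\cR^{*})^{n-1}(u)+\partial(a_{n-1})$ with $a_{n-1}\in C\{u\}$ (the $\partial^{-1}$'s occurring in building $a_{n-1}$ being controlled by the induction hypothesis). Since $(\cR^{*})^{n-1}(u)=2v_n$ and $\cR^{n-1}(u')=\kdv_{n-1}=2\partial(v_n)$, this reads $2u\,\partial(v_n)=2u'v_n+\partial(a_{n-1})$; adding the trivial identity $2u'v_n=\partial(2uv_n)-2u\,\partial(v_n)$ and solving yields $2\,u'v_n=\partial\bigl(uv_n-\tfrac12 a_{n-1}\bigr)$ outright, and then the factorization $\cR=\partial\circ\bigl(-\tfrac14\partial+\tfrac12\partial^{-1}u+\tfrac12 u\partial^{-1}\bigr)$ finishes the step. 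This ``doubling'' is precisely what is missing from your observation that naive integration by parts only reproduces $\partial^{-1}(u'v_n)$: the adjoint identity produces the troublesome term with the opposite sign, so the two copies reinforce and can be solved for instead of cancelling. Your closing paragraph gestures at this elementary route (skew-adjointness and explicit bilinear concomitants) but does not execute it; carrying it out lands on the paper's argument and lets you discard the gradient hypothesis entirely.
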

\begin{proof}
Observe that $\cR(\kdv_{n-1})$ is well defined if and only if $\kdv_{n-1}$ is a total derivative. We will prove this by induction on $n$.
It is trivial for $n=1$ since $\kdv_0=\partial (u)$. Let us assume that $\kdv_{n-1}=\partial (g_{n-1})$, $g_{n-1}\in C\{u\}$.

Since $\cR$ and $\cR^*$ are adjoint operators we have $p\cR (q)=q\cR^*(p)+\partial (a)$, $p,q,a\in C\{u\}$. Thus for $p=u$ and $q=u'$ we get
\[u\cR^{k}(u')=u'(\cR^*)^k(u)+\partial (a_{k}),\,\,\,\mbox{ for }a_{k}\in C\{u\}.\]
Then
\[u \kdv_{n-1}=(\partial u-u\partial) (\cR^*)^{n-1}(u)+\partial (a_{n-1})=-u\partial (\cR^*)^{n-1}(u)+\partial (b),\,\,\, b\in C\{u\}\]
which implies that $u \kdv_{n-1}=\partial (b/2)$ is the total derivative of a differential polynomial in $C\{u\}$. Since
\[\cR=\partial\left(-\frac{1}{4}\partial+\frac{1}{2}\partial^{-1}u+\frac{1}{2}u\partial^{-1}\right)\]
we obtain that $\kdv_n=\cR(\kdv_{n-1})$ is a total derivative.

The fact that $\kdv_n$ is a total derivative and \eqref{eq-vkdv} imply that $v_{n+1}$ are also elements of $C\{u\}$.
\end{proof}

As in \cite{GH}, we define a family of differential operators in $C\{u\}[\partial]$ of odd order (see also \cite{Dikii}, \cite{Novikov})
\begin{equation}\label{eq-A2s+1}
P_1:=\partial,\,\,\,P_{2n+1}:=v_{n}\partial-\frac{1}{2}\partial(v_{n})+P_{2n-1}L,\mbox{ for }n\geq 1.
\end{equation}
Observe that

\begin{equation}\label{eq-P2n+1}
P_{2n+1}=\sum_{l=0}^n\left(v_{n-l}\partial-\frac{1}{2}\partial(v_{n-l})\right)L^l.
\end{equation}
The operators $P_{2n+1}$ have the important property that the commutator  $[P_{2n+1},L]$ is a differential operator in $C\{u\}[\partial]$ but Lemma \ref{lem-fs} shows that it has order zero, it is the multiplication operator by the $\kdv_n$ differential polynomial. This is the famous Lax representation of $\kdv_n$, see \cite{GH}, \cite{Novikov}.
We will call the differential operators $P_{2n+1}(u)$ the {\it KdV differential operators}.

\begin{lem}\label{lem-fs}
For $n\in \bbN$ it holds $[P_{2n+1},L]=\kdv_n$.
\end{lem}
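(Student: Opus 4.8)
The plan is to induct on $n$, exploiting the recursive definition \eqref{eq-A2s+1} of $P_{2n+1}$ together with two structural facts already at hand: the identity $2\partial(v_{n+1})=\kdv_n$ of \eqref{eq-vkdv}, and the explicit form of $\cR$ in \eqref{eq-recursion}. For the base case $n=0$ one has $P_1=\partial$, and a direct computation gives $[\partial,L]=[\partial,-\partial^2+u]=[\partial,u]=\partial(u)=u'=\kdv_0$, since the top-order terms cancel.

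For the inductive step, assume $[P_{2n-1},L]=\kdv_{n-1}$. Splitting \eqref{eq-A2s+1} gives
\[[P_{2n+1},L]=\left[v_n\partial-\frac{1}{2}\partial(v_n),\,L\right]+[P_{2n-1}L,L],\]
and the last term factors as $[P_{2n-1},L]\,L=\kdv_{n-1}L$ because $L$ commutes with itself; here the induction hypothesis enters. The remaining bracket I would expand by repeatedly using $\partial a=a\partial+\partial(a)$ to move every $\partial$ to the right. Writing $f=v_n$, this yields $[f\partial,L]=2f'\partial^2+f''\partial+fu'$ and $[-\frac{1}{2}f',L]=-f''\partial-\frac{1}{2}f'''$, whence
\[\left[v_n\partial-\frac{1}{2}\partial(v_n),\,L\right]=2v_n'\partial^2+v_nu'-\frac{1}{2}v_n'''.\]

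Adding $\kdv_{n-1}L=-\kdv_{n-1}\partial^2+\kdv_{n-1}u$, the operator $[P_{2n+1},L]$ has second-order part $(2v_n'-\kdv_{n-1})\partial^2$. This is the crux of the argument: by \eqref{eq-vkdv} we have $\kdv_{n-1}=2\partial(v_n)=2v_n'$, so the leading term vanishes and $[P_{2n+1},L]$ collapses to the order-zero operator $v_nu'-\frac{1}{2}v_n'''+2v_n'u$. It remains to recognize this as $\kdv_n$: since $\kdv_n=\cR(\kdv_{n-1})=\cR(2v_n')$, evaluating $\cR=-\frac{1}{4}\partial^2+u+\frac{1}{2}u'\partial^{-1}$ on $2v_n'$ (and using $\partial^{-1}(2v_n')=2v_n$) returns exactly $-\frac{1}{2}v_n'''+2uv_n'+u'v_n$, which closes the induction.

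The computation is essentially mechanical, so I do not expect a genuine obstacle; the one place demanding care is the synchronized use of the two identities. One must observe that the spurious second-order term is killed precisely by \eqref{eq-vkdv}, and that the surviving order-zero remainder is reassembled into $\kdv_n$ precisely by the shape of the recursion operator $\cR$. Everything else is bookkeeping of the noncommutative product.
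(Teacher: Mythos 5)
Your proof is correct and follows essentially the same route as the paper's: induction on $n$ via the recursive definition \eqref{eq-A2s+1}, with the spurious second-order term killed by $2\partial(v_n)=\kdv_{n-1}$ from \eqref{eq-vkdv} and the surviving order-zero remainder recognized through the recursion operator. The only cosmetic difference is that the paper identifies the remainder as $\tfrac{1}{2}S(v_{n+1})=-2\partial\cR^*(v_{n+1})$ via the adjoint operator $\cR^*$, while you read it directly as $\cR(\kdv_{n-1})=\kdv_n$; the computations are identical.
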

\begin{proof}
One can easily check that $v_1=\cR^*(1)=u/2$ and $[P_1,L]=u'=2\partial (v_1)$. We prove the result by induction on $n$.
Since $P_{2n+3}=O+P_{2n-1}L$ where $O=v_{n+1}\partial-\frac{1}{2}\partial (v_{n+1})$, we have
\[[L,P_{2n+3}]=[L,O]+[L,P_{2n+1}]L=[L,O]-2\partial (v_{n+1})L,\]
with
\begin{align*}
&[L,O]=-2\partial (v_{n+1})\partial^2+(1/2)\partial^3(v_{n+1})-v_{n+1}u',\\
&2\partial (v_{n+1})L=-2\partial (v)'_{n+1}\partial^2+2\partial (v_{n+1})u.
\end{align*}
Observe that $\cR^*=-\frac{1}{4}\partial^{-1}S$ where $S=\partial^3-4u\partial-2u'$. Thus
\[[L,P_{2n+3}]=(1/2)\partial^3(v_{n+1})-v_{n+1}u'-2\partial (v_{n+1})u=(1/2)S(v_{n+1})=
-2\partial\cR^*(v_{n+1})=-2\partial(v_{n+2}).\]
By \eqref{eq-vkdv} the result is proved.
\end{proof}

Now let us consider algebraic indeterminates $c_n$, $n\geq 1$ over $C$. We
define an extended family of {\it KdV differential polynomials} $\KdV_n(u,c^n)$, $n\in\bbN$ in the differential indeterminate $u$ and the list of algebraic indeterminates $c^n=(c_1,\ldots ,c_n)$.
\begin{equation}\label{eq-KdV}
\KdV_0:=u',\,\,\, \KdV_n:=\kdv_n+\sum_{l=0}^{n-1} c_{n-l} \kdv_l, \mbox{ for } n\geq 1
\end{equation}
and an extended family of {\it KdV differential operators} whose coefficients are differential polynomials in $u$ and $c^n$,
\begin{equation}\label{eq-Ac2s+1}
\hat{P}_1:=\partial\mbox{ and }\hat{P}_{2n+1}:=P_{2n+1}+\sum_{l=0}^{n-1} c_{n-l}P_{2l+1}, \mbox{ for } n\geq 1.
\end{equation}
One can easily check that
\begin{equation}\label{eq-PLKdV}
[\hat{P}_{2n+1},L]=\KdV_n=2\partial(f_{n+1}),
\end{equation}
for
\begin{equation}
f_0:=v_0=1\mbox{ and } f_n:=v_n+ \sum_{l=0}^{n-1} c_{n-l}v_l, \mbox{ for } n\geq 1.
\end{equation}

\subsection{Differential resultant and subresultants}\label{sec-Differential Resultant}

\vspace{2mm}

Let $K$ be a differential field as in Section \ref{sec-Preliminaries}.
Let us consider differential operators $P$ and $Q$ in $K[\partial]$ of orders $n$ and $m$ respectively and leading coefficients $a_n$ and $b_m$. We are interested in the common solutions of the system of linear differential equations
\[\left\{\begin{array}{l}P=0\\Q=0\end{array}\right..\]
The tools we have chosen to study this problem are differential resultant and subresultants.
They are an adaptation of the algebraic resultant of two algebraic polynomials in one variable to a noncommutative situation.
We summarize next the definition and some important properties of differential resultants to be used in this work.

\subsubsection{Differential resultant for ODO's and main properties}

\para

The Sylvester matrix $S_0(P,Q)$ is the coefficient matrix of the extended system of differential operators
\[\Xi_0(P,Q)=\{\partial^{m-1}P,\ldots \partial P, P, \partial^{n-1}Q,\ldots ,\partial Q, Q\}.\]
Observe that $S_0(P,Q)$ is a squared matrix of size $n+m$ and entries in $K$. We define the {\it differential resultant} of $P$ and $Q$  to be
$$\dres(P,Q):=\det (S_0(P,Q)).$$

\begin{ex}
Given $P=a_2\partial^2 +a_1\partial +a_0$ and $Q=b_3\partial^3+b_2\partial^2 +b_1\partial +b_0$ in $K [\partial]$,
\[ S_0(P,Q)=
\left[\begin{array}{ccccc}
a_2&a_1+2\partial(a_2)&a_0+2\partial(a_1)+\partial^2(a_2)&2\partial(a_0)+\partial^2(a_1)&\partial^2(a_0)\\
0  &a_2 & a_1 +\partial (a_2)& a_0+\partial (a_1)  &\partial (a_0)\\
0  &   0 & a_2 & a_1    & a_0 \\
b_3&b_2 +\partial (b_3)& b_1 +\partial (b_2)& b_0+\partial (b_1)  &\partial (b_0)\\
0  &   b_3 & b_2 & b_1    & b_0
\end{array}\right].
\]
\end{ex}

The next propositions state the most relevant properties of the differential resultant.

\begin{prop}[\cite{Ch}]\label{prop-Propertiesdres}
Let $(P,Q)$ be the left ideal generated by $P,Q$ in $K[\partial]$.
\begin{enumerate}
\item $\dres(P,Q)=AP+BQ$ with $A,B\in K[\partial]$, $\ord(A)<m$, $\ord(B)<n$, that is
$\dres(P,Q)$ belongs to the elimination ideal $(P,Q)\cap K$.

\item $\dres(P,Q)=0$ if and only if $P=\bar{P} R$, $Q=\bar{Q} R$,
with $\ord(R)>0$ and $\bar{P}, \bar{Q},R\in K[\partial]$.
\end{enumerate}
\end{prop}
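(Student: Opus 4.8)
The plan is to handle part (1) by Cramer's rule applied to the Sylvester matrix, and part (2) by recognizing $\dres(P,Q)$ as the determinant of the $K$-linear Sylvester map and then importing the order identity relating $\mathrm{lclm}$ and $\gcrd$ in $K[\partial]$.

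\textbf{Part (1).} Each operator in $\Xi_0(P,Q)$ has order $<n+m$, hence is a $K$-linear combination of the basis $\partial^{n+m-1},\ldots,\partial,1$; by construction $S_0(P,Q)$ is exactly the matrix whose rows are these coordinate vectors. Writing the operators of $\Xi_0$ as a column $\mathbf{w}=(\partial^{m-1}P,\ldots,P,\partial^{n-1}Q,\ldots,Q)^{T}$ and the basis as a column $\mathbf{e}=(\partial^{n+m-1},\ldots,\partial,1)^{T}$, one has the identity $\mathbf{w}=S_0\,\mathbf{e}$, where the scalars of $S_0\in K^{(n+m)\times(n+m)}$ multiply the operators $\partial^{i}$ on the left. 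Multiplying on the left by the adjugate $\mathrm{adj}(S_0)$, whose entries lie in $K$, gives $\mathrm{adj}(S_0)\,\mathbf{w}=\det(S_0)\,\mathbf{e}$; reading off the last coordinate (the one attached to $\partial^{0}=1$) produces $\dres(P,Q)=\det(S_0)$ on one side and a $K$-combination of the $\partial^{i}P$ with $0\le i\le m-1$ and the $\partial^{i}Q$ with $0\le i\le n-1$ on the other. Collecting powers of $\partial$ on the left of $P$ and $Q$ yields $\dres(P,Q)=AP+BQ$ with $\ord(A)<m$ and $\ord(B)<n$, and since $\dres(P,Q)\in K$ this places it in $(P,Q)\cap K$.

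\textbf{Part (2).} I would read $S_0$ as the matrix of the Sylvester map $\Phi\colon (A,B)\mapsto AP+BQ$ from $\{\ord A<m\}\times\{\ord B<n\}$ to $\{\ord M<n+m\}$, a map between $K$-vector spaces of equal dimension $n+m$. Thus $\dres(P,Q)=\det(S_0)=0$ if and only if $\Phi$ has nontrivial kernel, i.e. there is a nonzero pair $(A,B)$ with $AP+BQ=0$; equivalently $N:=AP=-BQ$ is a nonzero common left multiple of $P$ and $Q$ with $\ord(N)<n+m$ (nonzero because $K[\partial]$ is a domain, so $A=0$ would force $B=0$ and conversely). It remains to convert this into a common right factor.

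\textbf{The key structural input}, which I expect to carry the real weight, is that $K[\partial]$ is a noncommutative left and right Euclidean domain (here $\mathrm{char}\,K=0$). This guarantees that $\gcrd(P,Q)$ and $\mathrm{lclm}(P,Q)$ exist, that $\mathrm{lclm}(P,Q)$ right-divides every common left multiple, and that the order identity $\ord(\mathrm{lclm}(P,Q))+\ord(\gcrd(P,Q))=\ord(P)+\ord(Q)=n+m$ holds. Granting these, a nonzero common left multiple of order $<n+m$ exists precisely when $\ord(\mathrm{lclm}(P,Q))<n+m$, that is, precisely when $\ord(\gcrd(P,Q))>0$; setting $R:=\gcrd(P,Q)$ then gives $P=\bar P R$, $Q=\bar Q R$ with $\ord(R)>0$, which is statement (2). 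The implication ``common right factor $\Rightarrow \dres(P,Q)=0$'' also admits a direct proof: if $P=\bar P R$ and $Q=\bar Q R$ with $\ord(R)=r>0$, then every element of $\Xi_0$ is a right multiple of $R$ and hence lies in the $K$-span of the $n+m-r$ operators $R,\partial R,\ldots,\partial^{\,n+m-1-r}R$, forcing the $n+m$ rows of $S_0$ to be linearly dependent. So the only ingredient beyond linear algebra is the Euclidean structure of $K[\partial]$ together with the attendant order identity.
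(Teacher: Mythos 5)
Your argument is correct. Note, however, that the paper does not prove this proposition at all: it is quoted from Chardin's paper \cite{Ch} as a known result, so there is no in-paper proof to compare against. Your reconstruction is the standard one and is sound. Part (1) via the adjugate identity $\mathrm{adj}(S_0)\,S_0=\det(S_0)I$ applied to the column of monomials $\partial^{i}$ is fine, since the entries of $S_0$ lie in the commutative field $K$ and $K[\partial]$ is a left $K$-module, so the matrix manipulations are legitimate even though $K[\partial]$ itself is noncommutative. For part (2), the equivalence $\det(S_0)=0\iff$ there is a nonzero common left multiple of order $<n+m$ is pure linear algebra over $K$, and you correctly isolate the one genuinely nontrivial input: the Ore order identity $\ord(\mathrm{lclm}(P,Q))+\ord(\gcrd(P,Q))=n+m$, valid because $K[\partial]$ is a (left and right) Euclidean domain. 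Your direct argument for the converse implication (all elements of $\Xi_0$ lie in the $(n+m-r)$-dimensional span of $R,\partial R,\dots,\partial^{\,n+m-1-r}R$, forcing the rows of $S_0$ to be dependent) is also correct and shows that half of the statement needs nothing beyond the domain property.
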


Observe that Proposition \ref{prop-Propertiesdres}, 1, indicates that $AP+BQ$ is an operator of order zero, the terms in $\partial$ of degree greater than zero have been eliminated. Furthermore, Proposition \ref{prop-Propertiesdres}, 2 states that  $\dres(P,Q)=0$ is a condition on the coefficients of the operators that guarantees a right common factor.

\para

Given a fundamental system of solutions $y_1,\ldots ,y_n$ of $P=0$, let us denote by $W(y_1,\ldots ,y_n)$ the Wronskian matrix
\[
W(y_1,\ldots,y_n)=\left[\begin{array}{ccc}
y_1 & \cdots & y_n\\
\partial y_1 & \cdots & \partial y_n\\
\vdots & \vdots & \vdots\\
\partial^{n-1} y_1 & \cdots & \partial^{n-1} y_n
\end{array}\right]
\]
and by $w(y_1,\ldots ,y_n)$ its determinant.  As in the case of the classical algebraic resultant there is a Poisson formula for $\dres(P,Q)$.

\begin{prop}[\cite{Ch}, Theorem 5, see also \cite{Prev}]\label{prop-Poisson}
Given $P,Q\in K[\partial]$ with respective orders $n$ and $m$, leading coefficients $a_n$ and $b_m$ and fundamental systems of solutions $y_1,\ldots ,y_n$ and $z_1,\ldots ,z_m$ respectively of $P=0$ and $Q=0$. It holds,
\begin{align*}
\dres(P,Q)&=(-1)^{nm} a_n^m\frac{w(Q(y_1),\ldots ,Q(y_n))}{w(y_1,\ldots ,y_n)}=b_m^n\frac{w(P(z_1),\ldots ,P(z_m))}{w(z_1,\ldots ,z_m)}.\\
\end{align*}
\end{prop}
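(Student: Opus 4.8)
The plan is to prove the second equality, $\dres(P,Q)=b_m^n\, w(P(z_1),\ldots,P(z_m))/w(z_1,\ldots,z_m)$, and then obtain the first from it. Indeed, swapping the block of $m$ rows coming from $P$ with the block of $n$ rows coming from $Q$ in the Sylvester matrix shows $\dres(P,Q)=(-1)^{nm}\dres(Q,P)$, and applying the second equality to $\dres(Q,P)$ (with the roles of $P$ and $Q$, hence of $n$ and $m$, interchanged) produces the expression with $a_n^m$ and the $y_i$. The factor $(-1)^{nm}$ is exactly what relates the two displayed forms, so it suffices to establish one of them.

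The heart of the argument is a factorization of the Wronskian $w(P(z_1),\ldots,P(z_m))$. First I would right-divide $P$ by $Q$, writing $P=\tilde B Q+R$ with $\ord R<m$; since each $z_j$ satisfies $Q(z_j)=0$, this gives $P(z_j)=R(z_j)$. Differentiating, $\partial^i\big(P(z_j)\big)=(\partial^i P)(z_j)=(\partial^i P \bmod Q)(z_j)$, the last equality because $z_j$ is annihilated by $Q$. Writing $\partial^i P\bmod Q=\sum_{k=0}^{m-1}p^{(i)}_k\partial^k$ with $p^{(i)}_k\in K$, the Wronskian matrix $W(P(z_1),\ldots,P(z_m))$ factors as $\cP\cdot W(z_1,\ldots,z_m)$, where $\cP=(p^{(i)}_k)$ is the $m\times m$ matrix of residue coefficients. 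Taking determinants yields $w(P(z_1),\ldots,P(z_m))=\det(\cP)\,w(z_1,\ldots,z_m)$, so the claimed quotient equals $\det\cP$ and no longer involves the solutions.

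It then remains to prove the purely algebraic identity $\dres(P,Q)=\det S_0(P,Q)=b_m^n\det\cP$ (up to the sign discussed below). For this I would perform determinant-preserving row operations on $S_0(P,Q)$: use the rows $\partial^{n-1}Q,\ldots,Q$ to reduce each of the rows $\partial^{m-1}P,\ldots,P$ modulo $Q$. The quotients arising in reducing $\partial^i P$ (with $i\le m-1$) have order at most $n-1$, so precisely the available $Q$-rows suffice. After reduction the $P$-rows have order $<m$, so their entries in the $n$ highest-degree columns (those of $\partial^{n+m-1},\ldots,\partial^m$) vanish, while the $Q$-block is triangular with $b_m$ on its diagonal in exactly those columns. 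The matrix is thus block anti-triangular, and its determinant is the product of $b_m^n$ with the determinant of the $m\times m$ residue block, which equals $\det\cP$ after accounting for row and column reversals.

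The step I expect to be most delicate is the bookkeeping of the sign $(-1)^{nm}$: it is assembled from the block column-swap in the anti-triangular determinant, the row- and column-reversals identifying the residue block with $\cP$, and the antisymmetry used to pass between the two displayed expressions. These must be tracked against the fixed ordering of $\Xi_0(P,Q)$ and the chosen Wronskian convention in order to place the factor $(-1)^{nm}$ on the correct side, as in the statement. Throughout, the fundamental systems are taken in a suitable differential extension of $K$, e.g.\ a Picard--Vessiot extension; the final identity is then an equality in $K$, since the left-hand side $\dres(P,Q)$ lies in $K$.
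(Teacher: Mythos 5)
The paper does not actually prove this proposition: it is quoted verbatim from Chardin (Theorem~5 of \cite{Ch}), so there is no in-house argument to compare against. Your proof is a correct, self-contained reconstruction of the standard argument, and its two ingredients are the right ones: (i) the factorization $W(P(z_1),\ldots,P(z_m))=\cP\,W(z_1,\ldots,z_m)$, where the rows of $\cP$ are the coefficient vectors of the remainders of $\partial^{i}P$ under right division by $Q$ --- valid because each $z_j$ is annihilated by $Q$ and because $w(z_1,\ldots,z_m)\neq 0$ for a fundamental system taken in a Picard--Vessiot extension; and (ii) the determinant-preserving reduction of the $P$-rows of $S_0(P,Q)$ by the $Q$-rows, which works precisely because the quotients have order at most $n-1$, and which produces a block anti-triangular matrix whose determinant is $b_m^n\det\cP$ up to the block-swap sign. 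The antisymmetry $\dres(Q,P)=(-1)^{nm}\dres(P,Q)$ then transfers the identity to the other fundamental system, as you say. One caveat on the very point you flag as delicate: if you carry the bookkeeping through against the paper's own ordering of $\Xi_0(P,Q)$ (the $P$-rows listed first, columns from $\partial^{n+m-1}$ down to $1$), the block swap places the factor $(-1)^{nm}$ on the $b_m^n$ side, not on the $a_n^m$ side. The example $P=\partial-a$, $Q=\partial-b$ already shows this: $\det S_0(P,Q)=a-b$, while $b_m^n\,w(P(z_1))/w(z_1)=b-a$ and $a_n^m\,w(Q(y_1))/w(y_1)=a-b$. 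So the signs in the displayed statement correspond to a different convention for $S_0$ than the one the paper itself writes down; your method is sound, but when you "place the factor $(-1)^{nm}$ on the correct side, as in the statement," be aware that the statement's side is not the one forced by the paper's definition of the Sylvester matrix.
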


\subsubsection{Subresultant sequence}\label{subsec-subres}

We introduce next the subresultant sequence for $P$ and $Q$, which was defined in \cite{Ch}, see also \cite{Li}.
For $k=0,1,\ldots ,N:=\min\{n,m\}-1$ we define the matrix $S_k(P,Q)$ to be the coefficient matrix of the extended system of differential operator
\[\Xi_k(P,Q)=\{\partial^{m-1-k} P,\ldots \partial P, P, \partial^{n-1-k}Q,\ldots ,\partial Q, Q\}.\]
Observe that $S_k(P,Q)$ is a matrix with $n+m-2k$ rows, $n+m-k$ columns and entries in $K$.
For $i=0,\dots ,k$ let $S_k^i(P,Q)$ be the squared matrix of size $n+m-2k$ obtained by removing the columns of $S_k(P,Q)$ indexed by $\partial^{k},\ldots ,\partial,1$, except for the column indexed by $\partial^{i}$. Whenever there is no room for confusion we denote $S_k(P,Q)$ and $S_k^i(P,Q)$ simply by $S_k$ and $S_k^i$ respectively.
The {\it subresultant sequence} of $P$ and $Q$ is the next sequence of differential operators in $K[\partial]$:
\[\cL_k=\sum_{i=0}^k \det(S_k^i) \partial^i,\,\,\, k=0,\ldots ,N.\]
In this paper we will only use $\cL_1=\det(S_1^0)+\det(S_1^1)\partial$ where
\begin{equation}\label{eq-S10}
S_1^0:={\rm submatrix}(S_1,\hat{\partial})
\end{equation}
and
\begin{equation}\label{eq-S11}
S_1^1:={\rm submatrix}(S_1,\hat{1})
\end{equation}
are the submatrices of $S_1 =S_1 (P,Q)$ obtained by removing columns indexed by $\partial$ and $1$ respectively.

Recall that $K[\partial]$ is a left Euclidean domain. If $\ord(P)\geq \ord(Q)$ then $P=qQ+r$ with $\ord(r)<\ord(Q)$, $q,r\in K[\partial]$. Let us denote by $\gcd(P,Q)$ the greatest common (right) divisor of $P$ and $Q$.

\begin{thm}[\cite{Ch}, Theorem 4. Differential Subresultant Theorem]\label{thm-subres}
Given differential operators $P$ and $Q$ in $K[\partial]$,  $\gcd(P,Q)$ is a differential operator of order $r$ if and only if:
\begin{enumerate}
\item $\cL_k$ is the zero operator for $k=0, 1, ,\ldots ,r-1$ and,

\item $\cL_r$ is nonzero.
\end{enumerate}
Then $\gcd(P,Q)=\cL_r$.
\end{thm}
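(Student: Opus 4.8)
The plan is to reduce the Differential Subresultant Theorem to the known structure theory of the left Euclidean domain $K[\partial]$ together with the explicit linear-algebra description of the operators $\cL_k$. The key conceptual observation is that, for each $k$, the span of the rows of $S_k(P,Q)$ is precisely the $K$-vector space of operators of the form $AP+BQ$ with $\ord(A)\le m-1-k$ and $\ord(B)\le n-1-k$, intersected with the operators of order at most $n+m-k-1$. Since $\gcd(P,Q)=:D$ generates the left ideal $(P,Q)$, every such combination $AP+BQ$ is a left multiple of $D$; conversely, the low-order multiples $\partial^j D$ of $D$ lie in this span. So the first task is to make this dictionary between the subresultant matrices and bounded-order elements of the ideal $(P,Q)$ completely precise.

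Granting that dictionary, I would argue both directions of the equivalence as follows. Suppose $\ord(D)=r$. For $k<r$ I want to show $\cL_k=0$. The operator $\cL_k=\sum_{i=0}^k\det(S_k^i)\partial^i$ has order at most $k<r$; by construction $\cL_k$ is (up to the usual Cramer-type cofactor identity) the unique element of the row space of $S_k$ whose coefficients in $\partial^{k+1},\dots,\partial^{n+m-k-1}$ all vanish, i.e. an element of $(P,Q)$ of order at most $k$. But $(P,Q)=K[\partial]D$, so any nonzero element of the ideal has order at least $r>k$; hence $\cL_k$ must be the zero operator, giving statement $1$. For $k=r$, the same reasoning shows $\cL_r$ is an element of $(P,Q)$ of order at most $r$, hence a $K$-multiple $cD$ of $D$; the heart of the matter is to show $c\ne0$, equivalently that $S_r$ has full row rank $n+m-2r$. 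This I would establish by a dimension count: the rank drop of $S_k$ is controlled by the dimension of the common solution space, which by Proposition~\ref{prop-Propertiesdres} (and the factorization it encodes) equals the order of the gcd, so the rank stays maximal exactly until $k$ reaches $r$. Finally, because $\cL_r=cD$ with $c\in K^\times$ and, by the chosen normalization of the $\det(S_r^i)$, the leading coefficient is arranged so that $c=1$, we conclude $\gcd(P,Q)=\cL_r$.

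The main obstacle is the nonvanishing and normalization step at $k=r$, that is, proving $\det(S_r^r)\ne0$ so that $\cL_r$ genuinely has order $r$ and recovers $D$ up to the correct unit. In the commutative polynomial case this is the classical subresultant theorem, where the fundamental theorem of subresultants pins down the last nonzero subresultant as a scalar multiple of the gcd; the noncommutative adaptation requires checking that the Sylvester-type row operations and the Cramer expansion behave correctly under the commutation rule $\partial a=a\partial+\partial(a)$, so that the determinants $\det(S_r^i)$ assemble into a genuine left multiple of $D$ rather than merely a formal expression. I expect this to rest on the exactness of the sequence relating $\Xi_k(P,Q)$ to multiples of $D$, together with a careful rank argument; once that is in place, the two listed conditions and the identification $\gcd(P,Q)=\cL_r$ follow formally. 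Since the statement is attributed to Chardin~\cite{Ch}, I would either cite the computation there for this rank lemma or reproduce its linear-algebra core, which is the only genuinely technical ingredient.
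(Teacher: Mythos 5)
First, note that the paper does not prove this statement at all: it is quoted verbatim from Chardin \cite{Ch} (Theorem 4), so there is no in-paper argument to compare yours against. Judged on its own terms, your outline has the right architecture and matches the standard proof strategy: identify the row space of $S_k$ with $\{AP+BQ:\ \ord(A)\le m-1-k,\ \ord(B)\le n-1-k\}$, observe that $\cL_k$ lies in this space by the cofactor expansion, and use $(P,Q)=K[\partial]\,D$ with $D=\gcd(P,Q)$ of order $r$ to kill $\cL_k$ for $k<r$ (any nonzero left multiple of $D$ has order at least $r$, while $\cL_k$ has order at most $k$). That half is complete and correct; note also that once this ideal-membership argument is in place, $\cL_r\neq 0$ automatically forces $\det(S_r^r)\neq 0$, since otherwise $\cL_r$ would be a nonzero element of $K[\partial]D$ of order less than $r$. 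So the final normalization worry is moot: $\gcd$ is only defined up to a unit of $K$, and any nonzero $K$-multiple of $D$ of order $r$ in the ideal \emph{is} the gcd.

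The genuine gap is exactly where you flag it, but the argument you sketch there does not close it. You assert that ``the rank drop of $S_k$ is controlled by the dimension of the common solution space, which by Proposition \ref{prop-Propertiesdres} equals the order of the gcd,'' but that proposition only concerns $k=0$ (vanishing of the resultant versus existence of a common factor of positive order); it says nothing about the rank of $S_k$ for $k\ge 1$, and full row rank of $S_r$ would in any case only guarantee that \emph{some} maximal minor is nonzero, not that the particular combination $\cL_r=\sum_i\det(S_r^i)\partial^i$ is. The missing step needs an actual construction: write $P=\bar{P}D$, $Q=\bar{Q}D$ with $\gcd(\bar{P},\bar{Q})=1$, use $\dres(\bar{P},\bar{Q})\neq 0$ to produce $A,B$ with $A\bar{P}+B\bar{Q}=1$, $\ord(A)<m-r$, $\ord(B)<n-r$, so that $D=AP+BQ$ lies in the row space of $S_r$, and then argue (via the determinant-polynomial identity relating $\cL_r$ to the elements of the row space of order at most $r$) that this forces $\cL_r\neq 0$. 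As written, your proposal defers precisely this step to Chardin, which is legitimate as a citation but means the proposal is not a self-contained proof of the one nontrivial direction.
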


\begin{rem} From \ref{thm-subres} we obtain the following consequences.
\begin{enumerate}
\item Given $\cL_r=\gcd(P,Q)$ then $P=\bar{P}\cL_r$ and $Q=\bar{Q}\cL_r$, $\bar{P},\bar{Q}\in K[\partial]$.

\item The $\gcd(P,Q)$ is nontrivial (it is not in $K$) if and only if $\cL_0=\dres(P,Q)=0$.
\end{enumerate}
\end{rem}

We will define resultants and first subresultants of KdV Schr\" odinger differential operators. Next we make some remarks in the formal case to be used later when $u$ is specialized to a potential in $K$.

\begin{rem}\label{prop-c1}
Let us consider the formal Schr\" odinger operator $L=-\partial^2+u$
and the differential operator $\hat{P}_{2s+1}(u,c^s)$ defined in
\eqref{eq-Ac2s+1}. The following statements hold:
\begin{enumerate}
\item We have the following formula:
$$\dres(L-\lambda, \hat{P}_{2s+1}-\mu)=-\mu^2+R_{2s+1}(u,c^s,\lambda)$$
where $R_{2s+1}(u,c^s,\lambda)$ is a polynomial in $C\{u\}[c^s, \lambda]$.

\item The determinant of $S_1^1(L-\lambda,\hat{P}_{2s+1}-\mu)$ is a polynomial $\varphi_2$ in $C\{u\}[c^s, \lambda]$.

\item The determinant of $S_1^0(L-\lambda, \hat{P}_{2s+1}-\mu)$ equals $-\mu-\alpha$, where $\alpha\in C\{u\}[c^s, \lambda]$ .
\end{enumerate}
\end{rem}

\subsection{Formal examples}\label{sec-formal ex}

We would like to highlight now that all definitions in Section \ref{sec-KdV} are algorithms due to Lemma \ref{lem-kdv}. Since $u$ is a differential indeterminate over $C$ and $\kdv_n$, $v_n$ are differential polynomials in $C\{u\}$, it is important to note that Lemma \ref{lem-kdv} guarantees that they are well defined, the symbolic integral $\partial^{-1}(\kdv_n)$  of $\kdv_n$ can be computed with any software for symbolic computation.

We implemented these definitions with Maple 18. The first iterations of these computations are:
\begin{align*}
&\kdv_1=-\frac{1}{4} u'''+\frac{3}{2} u u',\,\,\,
\kdv_2=\frac{1}{16} u^{(5)}-\frac{5}{8} u u'''-\frac{5}{4} u' u''+\frac{15}{8} u^2 u',\\
&\kdv_3= \frac{35}{16} u' u^3-\frac{35}{32} u^2 u'''-\frac{35}{8} u u' u''-\frac{35}{32} (u')^3+\frac{21}{32} u^{(4)} u'+\frac{7}{32} u^{(5)} u+\frac{35}{32} u''' u''-\frac{1}{64} u^{(7)} \\
&v_1=\frac{u}{2},\,\,\, v_2=\frac{3}{8}u^2-\frac{1}{8}u'',\,\,\,v_3=\frac {5\,{u}^{3}}{16}-{\frac {5\,u''u}{16}}-{\frac {5\,{(u')
}^{2}}{32}}+\frac{u^{(4)}}{32}
\end{align*}
To implement the KdV differential operators $\hat{P}_{2n+1}$ we used formula \eqref{eq-Ac2s+1} and the Maple package {\rm OreTools}. They are differential operators in $C\{u\} [\partial]$ the ring of differential polynomials in $u$. In fact by \eqref{eq-P2n+1} we obtain

\begin{align*}
P_3=&-\partial^3+\frac{3}{2} u \partial +\frac{3}{4}u',\,\,\,
    P_5=\partial^5-\frac{5}{2}u \partial^3-\frac{15}{4} u'\partial^2+\frac{15}{8}u^2\partial-\frac{25}{8}u''\partial -\frac{15}{16}u'''+\frac{15}{8}u u',\\
    P_7=&{\frac {105\,{u}^{2}u'}{32}}-{\frac {105\,u' u''}{16}}-{
\frac {105\,u'''u}{32}}+{\frac {63\,u^{(5)}}{64}}+ \left( {\frac {
35\,{u}^{3}}{16}}-{\frac {245\,{u'}^{2}}{32}}-{\frac {175\,u'' u}{16}}+{\frac {161\,u^{(4)}}{32}} \right) \partial\\
&+ \left( {\frac {175\,u'''}{16}}-{\frac {105\,u'u}{8}} \right) {\partial}^{2}+ \left( {\frac {
105\,u''}{8}}-{\frac {35\,{u}^{2}}{8}} \right) {\partial}^{3}+{\frac {35
\,u'{\partial}^{4}}{4}}+\frac{7}{2}\,u{\partial}^{5}-{\partial}^{7}.
\end{align*}

Let us consider the formal Schr\" odinger operator $L=-\partial^2+u$ and $\hat{P}_3(u,c^1)=P_3+c_1P_1$.  Let $\lambda$ and $\mu$ be algebraic indeterminates as in Section \ref{sec-notation}. The next differential resultant will be of interest

\begin{equation*}
\dres(L-\lambda,\hat{P}_3-\mu)=-\mu^2 +R_{3}(u,c^1,\lambda)=
-\mu^2-\lambda^3-2c_1 \lambda^2+p_1(u,c_1) \lambda +p_0(u,c_1)
\end{equation*}

where
\begin{equation*}
\begin{array}{ll}
p_1(u,c_1)=\frac{1}{4}u''+\frac{3}{4}u^2+c_1 u-c_1^2& \mbox{ with } \partial(p_1(u,c_1))=\KdV_1(u,c^1),\\
p_0(u,c_1)=\frac{1}{16} (u')^2 +\frac{1}{4}u^3-\frac{1}{8}u'' u-\frac{1}{4} u''c_1+u^2 c_1+u c_1^2&  \\
 & \mbox{ with } \partial(p_0(u,c_1))=\left(\frac{u}{2}+c_1\right) \KdV_1(u,c^1).
\end{array}
\end{equation*}

The differential subresultant of $L-\lambda$ and $\hat{P}_3-\mu$ is
\[\cL_1=\det(S_1^0)+\det(S_1^1)\partial= \left(-\mu-\frac{u'}{4}\right)+\left(\frac{u}{2}+c_1+\lambda\right)\partial\]
with
\begin{equation*}
S_1^0=\left[\begin{array}{ccc}
-1&0&u'\\
0&-1&u-\lambda\\
-1&0&\frac{3}{4}u'-\mu
\end{array}\right]
\mbox{ and }
S_1^1=\left[\begin{array}{ccc}
-1&0&u-\lambda\\
0&-1&0\\
-1&0&\frac{3}{2}u+c_1
\end{array}\right].
\end{equation*}

\para

\section{Integration  constants for KdV potentials}\label{sec-constants}

\para
In this section, we specialize $u$ to a potential $\tilde{u}$ in the differential field $K$, with field of constants $C$.
First observe that  $\KdV_n(\tilde{u},c^n)$ is equal to zero if there exists a set of constants $\tilde{c}^n\in C^n$ such that $\KdV_n(\tilde{u},\tilde{c}^n)=0$.

\subsection{Flag of constants for KdV potentials}\label{subsec-flag}

\para

Having fixed a potential $\tilde{u}$ in $K$, we will study next the determination of a set of constants $\tilde{c}^n$ sa\-tisfying the equation $\KdV_n(\tilde{u},c^n)=0$, $n\in\bbN$, in the set of algebraic variables $c^n=(c_1,\ldots ,c_n)$. We will explore the structure of the sets of constants verifying the KdV equations for a given potential $\tilde{u}$. Our method was motivated by \cite{GH}, Remark 1.5, where the problem is posted but no computational solution is given. We addressed the problem with the goal of giving an algorithm for the computation of constants, that is included in Section \ref{sec-alg_constants}.

\para

Recall that $\kdv_n$ is the differential polynomial in $C\{u\}$ given by \eqref{eq-kdv}. After replacing $u=\tilde{u}$ in $\kdv_n$ we obtain an element of $K$ denoted by $k_n=\kdv_n(\tilde{u})$.
Observe that the linear equation in $c_1,\ldots, c_n$
\begin{equation}\label{eq-KdVus}
\KdV_n(\tilde{u},c^n)=\kdv_n(\tilde{u})+\sum_{\ell=0}^{n-1} \kdv_\ell (\tilde{u}) c_{n-\ell }=k_n+k_{n-1}c_1+\ldots k_1 c_{n-1}+k_0 c_n=0
\end{equation}
determines an affine hyperplane in $K^n$. Let $\cH_n$ be its intersection with $C^n$
\[\cH_n:=\{\xi\in C^n\mid \KdV_n(\tilde{u},\xi)=0\}.\]

\begin{defi}\label{def-KdVpot}
We call a potential $\tilde{u}$ in a differential field $K$, a {\rm KdV potential} if
there exists $n\geq 1$ such that $\cH_n\neq \emptyset$.
Let $s$ be the smallest positive integer such that $\cH_s\neq \emptyset$,
we call $s$ the {\rm KdV level} of $\tilde{u}$. We will write $u_s$ for a  KdV potential  $\tilde{u}$ of KdV level $s$.
\end{defi}

Thus the level $s$ of a potential indicates the first equation $\KdV_s=0$ that is satisfied by $u_s$ for a given set of constants. Furthermore, the next proposition explains that $u_s$ satisfies $\KdV_n=0$ for all $n>s$. In addition, the choice of constants is unique in the first level but not in the remaining ones.

\begin{prop}\label{prop-Hn}
Given a potential $u_s$ the following statements are satisfied:
\begin{enumerate}
\item $\cH_s=\{\bar{c}^s\}$ with $\bar{c}^s=(c_1^s,\ldots ,c_s^s)\in C^s$.
\item For all $n>s$, the $C$ vector space $\cV_n:=\{\xi\in C^n\mid \KdV_n(u_s,\xi)-k_n=0\}$ has dimension $n-s$, namely
\begin{equation}\label{eq-Vn}
\cV_{s+1}=\langle (1,\bar{c}^s)\rangle,\,\,\, \cV_{n+1}=\cV_n\oplus \cW_n,\mbox{ with }\cW_n=\langle (1,\bar{c}^s,0,\ldots ,0)\rangle,
\end{equation}
identifying $\cV_n$ with its natural embedding in $\cV_{n+1}$ defined by $x\mapsto (0,x)$. Furthermore, there is an infinite flag
\begin{equation}\label{eq-flag}
\cV_s\subset\cdots\subset  \cV_n\subset\cdots,
\end{equation}
that we call the {\rm flag of constants for} $u_s$.
\item For all $n>s$, we have $\cH_{n}=\bar{c}^s_n+\cV_n$, with $\bar{c}^s_n=(c_1^s,\ldots ,c_s^s,0,\ldots ,0)\in C^n$. Furthermore, there is an infinite flag of affine spaces
\begin{equation}\label{eq-flag}
\cH_s\subset\cdots\subset  \cH_n\subset\cdots,
\end{equation}
identifying $\cH_n$ with its natural embedding in $\cH_{n+1}$ defined by $x\mapsto (x,0)$.
\end{enumerate}
\end{prop}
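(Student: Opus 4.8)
My plan is to reduce the whole statement to linear algebra over the constant field $C$ inside $K$. Once $u$ is specialized to $u_s$, the sequence $k_n:=\kdv_n(u_s)\in K$ is fixed, and by \eqref{eq-KdVus} the equation $\KdV_n(u_s,\xi)=0$ is the single $C$-linear condition $k_n+\sum_{j=1}^{n}k_{n-j}\,\xi_j=0$ on $\xi=(\xi_1,\dots,\xi_n)\in C^n$, with coefficient vectors $k_0,\dots,k_{n-1}\in K$. Hence $\cV_n$ is exactly the $C$-space of linear relations among $k_0,\dots,k_{n-1}$, and $\cH_n$ is the affine fibre of the $C$-linear map $C^n\to K,\ \xi\mapsto\sum_{j}k_{n-j}\xi_j$ over $-k_n$. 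Everything then follows from the dependence structure of the sequence $(k_n)_{n\ge0}$, and I would organize the proof around the single claim that $\operatorname{rank}_C\{k_0,\dots,k_{n-1}\}=s$ for all $n\ge s$.

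First I would extract the content of the minimality of $s$. Since $k_0=u_s'\neq0$ and, for $1\le j<s$, the emptiness of $\cH_j$ says precisely that $k_j\notin\langle k_0,\dots,k_{j-1}\rangle_C$, the vectors $k_0,\dots,k_{s-1}$ are $C$-linearly independent; in particular $\cV_s=\{0\}$. As $\cH_s\neq\emptyset$ by the definition of the level, $\cH_s$ is a single coset of $\cV_s=\{0\}$, which is part (1), with $\bar c^s$ its unique point.

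For part (2) the key is to propagate dependence upward using the recursion operator. The identity $\kdv_{\ell+1}=\cR(\kdv_\ell)$ gives the formal differential-polynomial identity $\cR(\KdV_n(u,\xi))=\KdV_{n+1}(u,(\xi,0))$. Evaluating the left-hand side at $u_s$ through $\cR=-\tfrac14\partial^2+u+\tfrac12 u'\partial^{-1}$, the relation $\KdV_n(u_s,\xi)=0$ annihilates the $-\tfrac14\partial^2$ and $u$ contributions, and since $\partial^{-1}\KdV_n=2f_{n+1}$ by \eqref{eq-PLKdV} what remains is $\KdV_{n+1}(u_s,(\xi,0))=f_{n+1}(u_s,\xi)\,k_0$, where $f_{n+1}(u_s,\xi)\in C$ because its derivative is $\tfrac12\KdV_n(u_s,\xi)=0$. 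Appending the constant $c_{n+1}=-f_{n+1}(u_s,\xi)$ then produces a point of $\cH_{n+1}$. Starting from $\bar c^s\in\cH_s$, this shows by induction that $\cH_n\neq\emptyset$ for every $n\ge s$; but $\cH_n\neq\emptyset$ means $k_n\in\langle k_0,\dots,k_{n-1}\rangle_C$, so inductively $\langle k_0,\dots,k_{n-1}\rangle_C=\langle k_0,\dots,k_{s-1}\rangle_C$ has dimension $s$, giving $\dim_C\cV_n=n-s$.

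Finally, for the flag I would verify that $x\mapsto(0,x)$ carries the defining relation of $\cV_n$ to that of $\cV_{n+1}$, so the embedding is well defined, and that any complement generator with first coordinate $1$ meets the embedded $\cV_n$ trivially; the dimension count from part (2) then forces $\cV_{n+1}=\cV_n\oplus\cW_n$, and part (3) follows since $\cH_n=\bar c^s_n+\cV_n$ once a particular solution $\bar c^s_n$ is fixed. I expect the main obstacle to be exactly the trailing coordinates of the explicit generators: the proposed generator $(1,\bar c^s,0,\dots,0)$ and particular solution $(c^s_1,\dots,c^s_s,0,\dots,0)$ are correct only when the integration constants $f_{n+1}(u_s,\cdot)\in C$ produced at each application of $\partial^{-1}$ vanish. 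This is the only genuinely non-formal ingredient, since the recursion above otherwise forces the last entry to be $-f_{n+1}(u_s,\cdot)$ rather than $0$; I would therefore concentrate the effort on showing that these constants are zero for the class at hand, which holds for the solitonic potentials of Section \ref{sec-Ejemplos} whose normalized densities $f_{n+1}$ vanish.
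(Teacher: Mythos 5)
Your overall route is the paper's: part (1) via the $C$-linear independence of $k_0,\ldots,k_{s-1}$ forced by the minimality of the level $s$, and parts (2)--(3) by pushing the relation $\KdV_s(u_s,\bar c^s)=0$ up the hierarchy with the recursion operator $\cR$ (this is the paper's identity \eqref{eq-rec}). Your treatment of (1) and of the dimension count $\dim_C\cV_n=n-s$ is complete and correct; in particular your observation that $\cH_n\neq\emptyset$ for all $n\ge s$ already pins down $\mathrm{rank}_C\{k_0,\ldots,k_{n-1}\}=s$, which is all the dimension statement needs, and it is obtained without committing to any particular value of the appended constant.

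Where you stop short is exactly the point where the paper is least careful. The paper derives the recurrence \eqref{eq-kdvc}, $k_n=-k_{n-1}c_1^s-\cdots-k_{n-s}c_s^s$, by specializing \eqref{eq-rec} at $(u_s,\bar c^s)$ and invoking $\KdV_s(u_s,\bar c^s)=0$ together with the linearity of $\cR_s$; this silently sets to zero the integration constant $\partial^{-1}\bigl(\KdV_s\bigr)(u_s,\bar c^s)=2\bigl(v_{s+1}+c_1^s v_s+\cdots+c_s^s v_1\bigr)(u_s)$, which lies in $C$ because its derivative vanishes, but which is precisely the quantity you isolate. Everything explicit in (2) and (3) --- the generator $(1,\bar c^s,0,\ldots,0)$ of $\cW_n$ and the particular solution $\bar c^s_n$ --- hinges on that constant and its higher analogues vanishing, and your suspicion that this is not automatic is well founded: for $u=6/x^2+a$ with $a\in C\setminus\{0\}$, a KdV potential of level $2$ with basic constants vector $\bar c^2=\bigl(-\tfrac{5}{2}a,\tfrac{15}{8}a^2\bigr)$, one computes $v_3(u)-\tfrac{5}{2}a\,v_2(u)+\tfrac{15}{8}a^2\,v_1(u)=\tfrac{5}{16}a^3\neq 0$, so that $\KdV_3\bigl(u,(\bar c^2,0)\bigr)=\tfrac{5}{16}a^3\,u'\neq 0$ and the trailing-zero descriptions in \eqref{eq-Vn} and in part (3) fail as written. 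So the one unproven step in your write-up --- the vanishing of the constants $f_{n+1}(u_s,\cdot)$ --- cannot be closed in the stated generality; it must either be imposed as a normalization of the potential (it does hold for the three families of Section \ref{sec-Ejemplos}, all of which vanish at infinity or are normalized to have no additive constant) or verified case by case as you propose. With that caveat made explicit, your argument is a more honest version of the paper's own proof rather than a gap relative to it.
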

\begin{proof}
\begin{enumerate}
\item If there exists $\xi=(\xi_1,\ldots ,\xi_s)\neq \bar{c}^s$ in $\cH_s$ then for some $1\leq i\leq s$, $\xi_i-c_i^s\neq 0$ and
\[k_{s-i}+k_{s-i-1} \frac{\xi_{i+1}-c_{i+1}^s}{\xi_i-c_i^s}+\ldots +k_0  \frac{\xi_s-c_s^s}{\xi_i-c_i^s}=0,\]
contradicting that $\cH_{s-i}=\emptyset$.
\item By \eqref{eq-kdv} and \eqref{eq-KdV} we have
\begin{align}\label{eq-rec}
\cR^{n-s}(\KdV_s(u,c^s))&=\cR^{n-s}(\kdv_s(u))+\sum_{\ell=0}^{s-1} c_{s-\ell}\cR^{n-s}(\kdv_{\ell}(u))=\kdv_n(u)+\sum_{\ell=0}^{s-1} c_{s-\ell }\kdv_{n-s+l}(u).
\end{align}
Let us consider the recursion operator \eqref{eq-recursion} for $u=u_s$, that is $\cR_s=-\frac{1}{4}\partial^2+u_s+\frac{1}{2}u_s'\partial^{-1}$.
Replacing $u$ by $u_s$ and $c^s$ by $\bar{c}^s$ in \eqref{eq-rec} we obtain
\begin{equation}\label{eq-kdvc}
k_{n}=- k_{n-1}c_1^s-\cdots -k_{n-s}c_s^s,
\end{equation}
since $\cR_s$ is a linear operator acting on $C\langle u_s\rangle$ and $\KdV_s(u_s,\bar{c}^s)=0$.

We prove \eqref{eq-Vn} by induction on $n$. An element $\xi=(\xi_1,\ldots ,\xi_{s+1})$ of $\cV_{s+1}$ verifies $k_s\xi_1+\cdots +k_0\xi_{s+1}=0$, and taking $n=s$ in \eqref{eq-kdvc} we get
\[k_{s-1}(\xi_2-\xi_1 c_1^s)+\cdots +k_{0}(\xi_{s+1}-\xi_1 c_s^s)=0.\]
Then $\xi=\xi_1 (1,\bar{c}^s)$, because 1. implies that $\cV_s$ is the null space. Let us assume that $\cV_n$ has basis $\{w_1,\ldots ,w_{n-s}\}$. Observe that
\[\cV_{n+1}\cap \{\xi\in\ C^{n+1}\mid \xi_1=0\}=\{0\}\times \cV_n\]
has basis $\cB=\{(0,w_1),\ldots ,(0,w_{n-s})\}$. Using \eqref{eq-kdvc} we can prove that $\xi\in \cV_{n+1}$ verifies
\[k_{n-1}(\xi_2-\xi_1c_1^1)+\ldots +k_{n-s}(\xi_{s+1}-\xi_1c_s^s)+k_{n-s-1}\xi_{s+2}+\cdots +k_0\xi_{n+1}=0.\]
Let $w=(1,\bar{c}^s,0,\ldots ,0)\in C^{n+1}$, then $\xi-\xi_1w\in \{0\}\times \cV_n$, which proves that $\{w\}\cup C\{u\}$ is a basis of $\cV_{n+1}$ of size $n+1-s$.

\item Substituting \eqref{eq-kdvc} in $\KdV_{n}(u_s,c^{n})$ gives
\begin{align*}
\KdV_{n}(u_s,c^{n})=&k_n (c_1-c_1^s)+\cdots +k_{n-s}(c_s-c_s^s)+k_{n-s-1} c_{s+1}+\ldots +k_0 c_{n},
\end{align*}
proving that $\cH_{n}=\bar{c}^s_{n}+\cV_{n}$. Similarly we can prove that given $\xi\in \cH_i$, $s\leq i\leq n-1$ then
\begin{align*}
\KdV_{i+1}(u_s,c^{i+1})=&k_i (c_1-\xi_1)+\cdots +k_1(c_i-\xi_i)+k_0 c_{i+1}.
\end{align*}
Therefore $\cH_i\times\{0\}\subset \cH_{i+1}$ and \eqref{eq-flag} follows.
\end{enumerate}
\end{proof}

The previous proposition shows that the flag of constants of a KdV potential $u_s$ of KdV level $s$ is determined by $\cH_s=\{\bar{c}^s\}$. We call $\bar{c}^s$ the {\it basic constants vector} of $u_s$.

\begin{ex}
As a first example, let us consider $\tilde{u}=6/x^2$ in $K=\bbC(x)$. One can easily check that $\tilde{u}$ is a KdV potential of level 2. It does not verify $\KdV_1(u,c_1)=kdv_1(u)+c_1 kdv_0(u)=0$ for any $c_1\in\bbC$ but $\KdV_2(\tilde{u},(0,0))=kdv_2(\tilde{u})=0$ and  $\KdV_n(u,c^n)=0$, $n> 2$ is satisfied by $u=\tilde{u}$ for infinitely many choices of $c^n\in \bbC^n$. Its basic constant vector is $\bar{c}^2=(0,0)$. More examples can be found in Section \ref{sec-Ejemplos}.
\end{ex}

The computation of the basic constants vector is algorithmic at least for a big family of potentials, as we explain in the next section.

\subsection{Computing the integration constants}\label{sec-alg_constants}

We designed an algorithm that decides if a potential $\tilde{u}$ in $K$ is a KdV potential and returns its level $s$ and basic constants vector $\bar{c}^s$. For this purpose we restrict to the case of potentials $\tilde{u}$ that are rational functions in an element $\eta$ in $K$.
 Let $C(\eta)$ be the field of rational functions in $\eta$, we assume that $\tilde{u}\in C(\eta)$. Furthermore we assume that $(\eta')^2\in C(\eta)$. This request is necessary for a hamiltonian algebraization  in order to preserve the Galoisian behavior of the factors, see \cite{AMW}. This situation is satisfied by the three families of KdV potentials that we will use to illustrate all the results of this paper in Section \ref{sec-Ejemplos}.

\para

We can distinguish two cases. If $\eta'\in C(\eta)$ then $C\langle \tilde{u}\rangle\subset C(\eta)$.
If $(\eta')^2\in C(\eta)$ but $\eta'\notin C(\eta)$ then $C\langle \tilde{u}\rangle$ is contained in the linear space $V=C(\eta)\oplus \eta'C(\eta)$.

\begin{lem}\label{lem-aV}
Let us consider $a\in V$.
\begin{enumerate}
\item If $a\in  \eta'C(\eta)$ then $a'\in C(\eta)$.
\item $a\in C(\eta)$ if and only if $a'\in  \eta'C(\eta)$.
\end{enumerate}
\end{lem}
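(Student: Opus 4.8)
The setting is a differential field $K$ of characteristic zero with constant field $C$, an element $\eta \in K$ satisfying $(\eta')^2 \in C(\eta)$, and the vector space decomposition $V = C(\eta) \oplus \eta' C(\eta)$. The key structural fact to extract first is a formula for $(\eta')'= \eta''$. Since $(\eta')^2 \in C(\eta)$, write $(\eta')^2 = h(\eta)$ for some $h \in C(\eta)$; differentiating gives $2\eta'\eta'' = h'(\eta)\eta'$, hence $\eta'' = \tfrac{1}{2}h'(\eta) \in C(\eta)$, where $h'$ denotes the derivative of the rational function $h$ with respect to its argument. This is the engine behind both parts, so I would record it at the outset.

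For part 1, take $a = \eta' r(\eta)$ with $r \in C(\eta)$. Then by the Leibniz rule $a' = \eta'' r(\eta) + \eta' \cdot r'(\eta)\eta' = \eta'' r(\eta) + (\eta')^2 r'(\eta)$. Both $\eta''$ and $(\eta')^2$ lie in $C(\eta)$ by the preliminary computation, and $r(\eta), r'(\eta) \in C(\eta)$ since $C(\eta)$ is a field closed under the formal derivation $r \mapsto r'(\eta)\eta'/\eta'$; thus $a' \in C(\eta)$, as claimed.

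For part 2, I would prove the two implications separately. If $a \in C(\eta)$, say $a = f(\eta)$ with $f \in C(\eta)$, then $a' = f'(\eta)\eta' \in \eta' C(\eta)$, which is the forward direction. For the converse, suppose $a \in V$ with $a' \in \eta' C(\eta)$; write $a = f(\eta) + \eta' g(\eta)$ with $f,g \in C(\eta)$. Differentiating using part 1 (applied to the term $\eta' g(\eta)$) gives $a' = f'(\eta)\eta' + \bigl(\eta'' g(\eta) + (\eta')^2 g'(\eta)\bigr)$, where the parenthesized term lies in $C(\eta)$ and $f'(\eta)\eta' \in \eta' C(\eta)$. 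This exhibits $a'$ in the direct sum $V = C(\eta) \oplus \eta' C(\eta)$ with $C(\eta)$-component equal to $\eta'' g(\eta) + (\eta')^2 g'(\eta)$. The hypothesis $a' \in \eta' C(\eta)$ forces this component to vanish, so $\eta'' g(\eta) + (\eta')^2 g'(\eta) = 0$.

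The main obstacle, and the real content of the converse, is to deduce $g = 0$ from the relation $\eta'' g(\eta) + (\eta')^2 g'(\eta) = 0$, i.e.\ that $g(\eta)$ is a constant and hence (by the level/structure of the problem, or simply because a nonzero constant multiple of a constant is constant) the term $\eta' g(\eta)$ contributes to $C(\eta)$-membership only when $g=0$. Substituting $\eta'' = \tfrac12 h'(\eta)$ and $(\eta')^2 = h(\eta)$ turns this into the first-order ODE $\tfrac12 h'(\eta) g(\eta) + h(\eta) g'(\eta) = 0$ in $C(\eta)$; this is exactly $\bigl(h(\eta) g(\eta)^2\bigr)' = h'(\eta)g(\eta)^2 + 2h(\eta)g(\eta)g'(\eta) = 0$ up to a factor, so $h(\eta)g(\eta)^2$ is a constant of the derivation. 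I would then argue that a nonzero element of $C(\eta)$ of the form (square root of a rational function)$\times g$ cannot be a nonzero constant unless $g=0$, using that $\eta' \notin C(\eta)$; the cleanest route is to observe $h(\eta) g(\eta)^2 = c \in C$ would give $(\eta')^2 g(\eta)^2 = c$, whence $\eta' g(\eta) = \pm\sqrt{c} \in C(\eta)$, contradicting $\eta' \notin C(\eta)$ unless $c = 0$, and $c=0$ forces $g = 0$. Thus $a = f(\eta) \in C(\eta)$, completing the converse. I expect this last contradiction step to be where care is needed, since it is precisely the place where the hypothesis $\eta' \notin C(\eta)$ (rather than merely $(\eta')^2 \in C(\eta)$) is used.
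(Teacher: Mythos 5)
Your proof is correct and follows essentially the same route as the paper's: both decompose $a'$ along $V=C(\eta)\oplus\eta'C(\eta)$, force the $C(\eta)$-component of $a'$ to vanish, and derive a contradiction with $\eta'\notin C(\eta)$ from the fact that $\eta'g(\eta)$ is then a constant. The only difference is cosmetic: your vanishing condition $\eta''g(\eta)+(\eta')^2g'(\eta)=0$ is precisely $\partial(\eta'g(\eta))=0$, so the detour through the first-order equation $(hg^2)'=0$ is unnecessary --- the paper reads off $\partial(\eta'h_1)\in C(\eta)\cap\eta'C(\eta)=0$ directly from part 1 and the forward implication of part 2.
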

\begin{proof}
Clearly, if $a\in  \eta'C(\eta)$ then $a'\in C(\eta)$ and also if $a\in C(\eta)$ then $a'\in  \eta'C(\eta)$. Let us assume $a'\in  \eta'C(\eta)$, with $a=h_0+h_1 \eta'$, $h_0,h_1\in C(\eta)$ and $h_1\neq 0$.
Then $\partial (a-h_0)=\partial(\eta'h_1)\in C(\eta)\cap  \eta'C(\eta)=0$ thus $\eta'h_1$ is a constant which contradict that $\eta'\notin C(\eta)$.
\end{proof}

Recall that $v_n$ is the differential polynomial in $C\{u\}$ given by \eqref{eq-fn}. After replacing $u=\tilde{u}$ in $v_n$ we obtain an element of $K$ that will be denoted by $v_n(\tilde{u})$. If $\eta'\in C(\eta)$ then $kdv_n(\tilde{u})\in C(\eta)$ and also $v_n(\tilde{u})\in C(\eta)$, for all $n\in \bbN$.

\begin{lem}
Let us consider a potential $\tilde{u}\in C(\eta)$.
Then $\kdv_n(\tilde{u})\in \eta'C(\eta)$ and $v_n(\tilde{u})\in C(\eta)$, for all $n$.
\end{lem}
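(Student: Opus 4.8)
The plan is to exploit the $\mathbb{Z}/2$-grading that $V=C(\eta)\oplus\eta'C(\eta)$ carries. Writing $V_0:=C(\eta)$ (the ``even'' part) and $V_1:=\eta'C(\eta)$ (the ``odd'' part), I would prove the two assertions simultaneously by showing that $v_n(\tilde u)$ is always even, $v_n(\tilde u)\in V_0$, while $\kdv_n(\tilde u)$ is always odd, $\kdv_n(\tilde u)\in V_1$. First I record that $V$ is closed under multiplication and is graded: since $(\eta')^2\in C(\eta)$ one has $V_1V_1=(\eta')^2C(\eta)\subset V_0$, while $V_0V_1\subset V_1$ and $V_0V_0\subset V_0$, so $V_iV_j\subset V_{i+j}$ with indices mod $2$. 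In this language Lemma \ref{lem-aV} says exactly that $\partial$ is an \emph{odd} operator, $\partial(V_0)\subset V_1$ and $\partial(V_1)\subset V_0$. Moreover, since $C\langle\tilde u\rangle\subset V$ and every $v_n$ lies in $C\{u\}$ by Lemma \ref{lem-kdv}, each $v_n(\tilde u)$ and each antiderivative produced by $\cR^*$ is automatically an element of $V$; the only thing left to determine is its grading.

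I would then argue $v_n(\tilde u)\in V_0$ by induction on $n$, the base case being $v_0=1\in V_0$. Assuming $v_{n-1}\in V_0$, by \eqref{eq-fn} and \eqref{eq-recursion}
\[
v_n=\cR^*(v_{n-1})=-\tfrac14 v_{n-1}''+u\,v_{n-1}-\tfrac12\,\partial^{-1}(u'\,v_{n-1}).
\]
The first term lies in $V_0$ (two applications of the odd operator $\partial$ return to the even part), and the second lies in $V_0V_0\subset V_0$. For the third, set $w:=\partial^{-1}(u'v_{n-1})$; it belongs to $V$, and its derivative is $w'=u'v_{n-1}\in V_1V_0\subset V_1$ because $u=\tilde u\in V_0$ forces $u'\in V_1$. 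Writing $w=w_0+w_1$ with $w_i\in V_i$ and using that $\partial$ is odd, the $V_0$-component of $w'$ equals $w_1'$; since $w'\in V_1$ this gives $w_1'=0$, so $w_1$ is a constant lying in $V_1$.

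This is the one place where the hypothesis $\eta'\notin C(\eta)$ is genuinely used, and I expect it to be the only real obstacle: I must rule out nonzero constants in $V_1$. If $\eta'f$ were a nonzero constant $c$ with $f\in C(\eta)$, then $\eta'=c/f\in C(\eta)$, contradicting $\eta'\notin C(\eta)$; hence $V_1\cap C=\{0\}$, so $w_1=0$ and $w\in V_0$. Consequently all three terms above lie in $V_0$, forcing $v_n\in V_0$ and completing the induction. The statement for $\kdv_n$ then follows with no separate induction: by \eqref{eq-vkdv} one has $\kdv_n=2\,\partial(v_{n+1})$, and since $v_{n+1}\in V_0$, Lemma \ref{lem-aV}(2) gives $\kdv_n\in V_1=\eta'C(\eta)$. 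In the complementary case $\eta'\in C(\eta)$ one has $V_1=V_0=C(\eta)$, and the claim is already contained in the remark preceding the lemma.
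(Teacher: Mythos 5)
Your proof is correct. It is essentially the mirror image of the paper's argument: the paper runs the induction on the odd sequence, showing $\kdv_n(\tilde u)\in\eta'C(\eta)$ via the recursion $\kdv_{n+1}=\cR(\kdv_n)$ and deducing $v_{n+1}(\tilde u)\in C(\eta)$ from Lemma \ref{lem-aV} along the way, whereas you induct on the even sequence $v_n(\tilde u)\in C(\eta)$ via $\cR^*$ and recover the statement for $\kdv_n$ at the very end from $\kdv_n=2\partial(v_{n+1})$. The substantive difference lies in how the nonlocal term of the recursion operator is tamed. The paper identifies $\partial^{-1}(\kdv_n)$ with $2v_{n+1}$, which is already under control at that stage of its intertwined induction, so no constant-of-integration issue arises there; you instead must show directly that $\partial^{-1}$ of an odd element is even, which forces you to rule out a nonzero constant in $\eta'C(\eta)$ --- but that is exactly the computation hidden inside the paper's proof of Lemma \ref{lem-aV}, so nothing genuinely new is required. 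Your explicit $\bbZ/2$-grading of $V=C(\eta)\oplus\eta'C(\eta)$, with $\partial$ odd and multiplication graded, is a cleaner bookkeeping device than the paper's term-by-term checks, and it makes transparent why every summand of $\cR^*(v_{n-1})$ lands in the even part; the price is the extra (correct) argument that the odd component of an antiderivative is a constant and hence vanishes.
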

\begin{proof}
Given $\tilde{u}\in C(\eta)$ we can easily prove that $\partial^{n}(\tilde{u})$ belongs to $\eta'C(\eta)$  for $n$ odd and belongs to $C(\eta)$ for $n$ even. Observe that $v_2(\tilde{u})\in C(\eta)$ (see Section \ref{sec-formal ex}) and thus $\kdv_1(\tilde{u})\in \eta'C(\eta)$ by Lemma \ref{lem-aV}, since $\kdv_1=\partial(v_2)$. Let us assume that $\kdv_n(\tilde{u})=\partial(v_{n+1})(\tilde{u})\in \eta'C(\eta)$, then Lemma \ref{lem-aV} implies $v_{n+1}\in C(\eta)$. Since $\kdv_{n+1}=\cR(\kdv_n)$ we have
\[\kdv_{n+1}(\tilde{u})=-\frac{1}{4}\partial^2(\kdv_n(\tilde{u}))+\tilde{u}\kdv_n(\tilde{u})+\frac{1}{2}\tilde{u}' v_{n+1}(\tilde{u}),\]
which is the sum of terms in $\eta'C(\eta)$, hence $\kdv_{n+1}(\tilde{u})\in \eta'C(\eta)$.
\end{proof}

From the previous lemmas and \eqref{eq-KdVus}, the next result follows.

\begin{prop}\label{prop-pnqn}
Given $\tilde{u}\in C(\eta)$ then
\begin{equation}
\KdV_n(\tilde{u},c^n)=
\left\{
\begin{array}{ll}
\frac{p_n(\eta)}{q_n(\eta)}&\mbox{ if }\eta'\in C(\eta),\\
\eta'\frac{p_n(\eta)}{q_n(\eta)}&\mbox{ if }(\eta')^2\in C(\eta),\,\,\,\eta'\notin C(\eta),
\end{array}
\right.
\end{equation}
where $p_n=\sum_d l_d \eta^d$ and $q_n\in C[\eta]$ are polynomials in $\eta$, with $l_d$ linear expressions in $c_1,\ldots ,c_n$ over $C$.
\end{prop}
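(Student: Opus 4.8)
The plan is to read the statement off directly from the affine-linear structure of $\KdV_n$ together with the membership results of the two preceding lemmas. By \eqref{eq-KdVus},
\[
\KdV_n(\tilde{u},c^n)=\kdv_n(\tilde{u})+\sum_{\ell=0}^{n-1} c_{n-\ell}\,\kdv_\ell(\tilde{u}),
\]
so the only dependence on the constants is the single factor $c_{n-\ell}$ multiplying each specialized polynomial $\kdv_\ell(\tilde{u})\in K$; in particular every $c_i$ enters with degree at most one. The whole proof amounts to placing this combination over a common denominator while being careful that the denominator does not itself absorb any $c_i$.

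First I would treat the case $\eta'\in C(\eta)$, where the previous lemma gives $\kdv_\ell(\tilde{u})\in C(\eta)$ for every $\ell$, so I may write $\kdv_\ell(\tilde{u})=a_\ell(\eta)/b_\ell(\eta)$ with $a_\ell,b_\ell\in C[\eta]$. The crucial move is to choose a common denominator $q_n\in C[\eta]$ (the least common multiple, or simply the product, of $b_0,\ldots,b_n$) depending only on $\tilde{u}$ and $\eta$ and not on the $c_i$. Clearing denominators yields $\KdV_n(\tilde{u},c^n)=p_n(\eta)/q_n(\eta)$ with
\[
p_n(\eta)=a_n(\eta)\frac{q_n(\eta)}{b_n(\eta)}+\sum_{\ell=0}^{n-1} c_{n-\ell}\,a_\ell(\eta)\frac{q_n(\eta)}{b_\ell(\eta)}\in C[\eta][c^n].
\]
Collecting the coefficient $l_d$ of $\eta^d$ in $p_n$, each $l_d$ is a linear (affine) expression in $c_1,\ldots,c_n$ over $C$, exactly because $q_n$ was chosen free of the $c_i$ and each $c_{n-\ell}$ already appears to degree one.

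The remaining case $(\eta')^2\in C(\eta)$, $\eta'\notin C(\eta)$ is handled identically after one factorization step: the previous lemma now gives $\kdv_\ell(\tilde{u})\in \eta'C(\eta)$, say $\kdv_\ell(\tilde{u})=\eta'\,a_\ell(\eta)/b_\ell(\eta)$, and the common factor $\eta'$ pulls out of the entire combination, leaving $\KdV_n(\tilde{u},c^n)=\eta'\,p_n(\eta)/q_n(\eta)$ with the argument above applied to the bracketed rational function. The only point that needs care — and which I expect to be the sole genuine subtlety — is this clean extraction of $\eta'$: it is legitimate precisely because the lemma guarantees that \emph{every} $\kdv_\ell(\tilde{u})$ lies in $\eta'C(\eta)$ rather than in the full space $V=C(\eta)\oplus\eta'C(\eta)$, so no $C(\eta)$-component survives to obstruct the factorization. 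Everything else is routine manipulation of rational functions, the one discipline being to keep the common denominator independent of the $c_i$ so that the coefficients $l_d$ remain linear.
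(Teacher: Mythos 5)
Your proposal is correct and follows exactly the route the paper intends: the paper gives no written proof, stating only that the proposition ``follows from the previous lemmas and \eqref{eq-KdVus}'', and your argument is precisely the routine filling-in of that claim --- specialize each $\kdv_\ell(\tilde{u})$ via the membership lemmas, clear a common denominator in $C[\eta]$ independent of the $c_i$, and observe that each $c_i$ enters affinely. Your remark on cleanly extracting the factor $\eta'$ in the second case is the right point of care and is exactly what the lemma on $\kdv_n(\tilde{u})\in\eta'C(\eta)$ is there to guarantee.
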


We are ready to give the announced algorithm.

\begin{alg}\label{alg-constants} (Basic Constants Vector)
Let $\eta\in K$ be such that $(\eta')^2\in C(\eta)$.
\begin{itemize}
\item \underline{\sf Given} $\tilde{u}\in C(\eta)$ and $s^*\geq 1$.
\item \underline{\sf Decide} if $\tilde{u}$ is a KdV potential of KdV level smaller than or equal to $s^*$ and \underline{\sf return} the KdV level $s$ and its basic constants vector $\bar{c}^s$.
\end{itemize}
\begin{enumerate}
\item Set $n:=1$.
\item Replace $u$ by $\tilde{u}$ in $\KdV_n(u,c^n)$ to obtain $\frac{p_n(\eta)}{q_n(\eta)}$, as in Proposition \ref{prop-pnqn}.
\item Collect the coefficients in $\eta$ of $p_n$ to obtain a nonhomogeneous system $\cS_n$
of linear equations over $C$ in the unknowns $c_1,\ldots c_n$.
\item If $\cS_n$ has a solution $\xi\in C^n$, {\sf return} $s:=n$ and $\bar{c}^s:=\xi$.
\item If $n=s^*$ {\sf return \textquotedblright it is not a KdV potential up to the required level\textquotedblright  }.
\item Set $n:=n+1$ and go to Step 2.
\end{enumerate}
\end{alg}

See examples in Sections \ref{sec-RM} and \ref{sec-Elliptic}.

\section{Spectral curves of KdV Schr\" odinger operators}\label{sec-spectralcurves}

In this section we will study the centralizers of Schr\" odinger operators $L_s=L(u_s)=-\partial^2+u_s$, where $u_s$ is a KdV potential of KdV level $s$ and basics constant vector $\bar{c}^s \in \mathbb{C}^s$, as defined in Section \ref{subsec-flag}. We will call $L_s$ a {\it KdV Schr\" odinger operator}  or  $\KdV_s$ for short. This will allow us to define the spectral curve of $L_s$ by means of an operator of  order $2s+1$ commuting with $L_s$. For this purpose we need to study the centralizer of the  operator $ L_s $. We will show that this centralizer is generated by $ L_s $ and another operator $ A_{2s + 1} $. This pair, $\{ L_s, A_{2s + 1} \} $, will be the one we use to calculate an equation of the spectral curve associated with $ L_s $.

\para

\subsection{Centralizers and Burchnall-Chaundy polynomials}\label{subsec-Centralizers}

\para

To start we summarize some results from \cite{Good} about centralizers of differential operators.
Let $P=a_n \partial^n+\cdots +a_1\partial+a_0$ be an operator in $E[\partial]$.
Let us denote by $\cC_E(P)$ the centralizer of $P$ in $E[\partial]$, that is
\[\cC_E(P)=\{Q\in E[\partial]\mid PQ=QP\}.\]
By \cite{Good}, Theorem 4.1,
if $n$ and $a_n$ are non zero divisors in $E$ then $C_E(P)$ is commutative.
Let $\cC^{\infty}$ be the ring of infinitely-many times differentiable complex-valued functions on the real line.
By \cite{Good}, Corollary 4.4, $\cC_{\cC^{\infty}}(P)$ is commutative if and only if there is no nonempty open interval on the real line on which the functions $\partial(a_0),a_1,\ldots ,a_n$ all vanish.

\para

Details of the evolution of these results from various previous works are given in \cite{Good}. We chose this reference because it simplifies the existing methods and applies them in as wide a context as reasonable. Precursors of the commutativity results are Schur \cite{Sch}, Flanders \cite{Fl}, Krichever \cite{K77}, Amitsur \cite{Amit}, Carlson and Goodearl \cite{CG}. Results describing centralizers $\cC_R(P)$ as a free module of finite rank appear in \cite{Fl}, \cite{Amit}, \cite{CG} and in Ore's well known paper \cite{Ore}.

\para

Recall that a commutative ring $E$ is called reduced if it has no nonzero nilpotent element. Observe that $\cC^{\infty}$ is not a field, but it is a reduced ring whose ring of constants is the field $\bbC$.

\begin{thm}\label{thm-good}
Let $E$ be a reduced differential ring whose subring $F$ of constants is a field. Let us assume that $n$ is invertible in $F$ and $a_n$ is invertible in $E$.
\begin{enumerate}
\item (\cite{Good}, Theorem 4.2) $\cC_E(P)$ is a commutative integral domain.

\item (\cite{Good}, Theorem 1.2) Let $X$ be the set of those $i$ in $\{0,1,2,\ldots ,n-1\}$ for which $\cC_E(P)$ contains an operator of order congruent to $i$ module $n$. For each $i\in X$ choose $Q_i$ such that $\ord(Q_i)\equiv i ({\rm mod}\, n)$ and $Q_i$ has minimal order for this property (in particular $0\in X$, and $Q_0=1$). Then $\cC_E(P)$ is a free $F[P]$-module with basis $\{Q_i\mid i\in X\}$. Moreover, the rank $t$ of $\cC_E(P)$ as a free $F[P]$-module is a divisor of $n$.
\end{enumerate}
\end{thm}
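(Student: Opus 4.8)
The statement combines two results of Goodearl; I sketch a self-contained route following the classical method of Schur, adapted to the reduced ring $E$. The plan is to pass to the ring of pseudo-differential operators $E[\partial^{-1}]$ and build a normalized $n$-th root of $P$. Since $a_n$ is a unit and $n$ is invertible in $F\subset E$, one can (after adjoining an $n$-th root $\beta$ of $a_n$ if needed) solve the equation $\xi^n=P$ recursively for a pseudo-differential operator
\[
\xi=\beta\,\partial+\sum_{j\le 0}b_j\partial^{j}\in E[\partial^{-1}],
\]
the coefficient $b_j$ at each step being uniquely determined because the relevant linear equation has the invertible coefficient $n\beta^{\,n-1}$. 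Any $Q\in\cC_E(P)$ then commutes with $\xi$ as well: writing $Q$ as a pseudo-differential series and comparing with the unique root shows that commuting with $P=\xi^n$ forces commuting with $\xi$.

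The centralizer of $\xi$ inside $E[\partial^{-1}]$ consists exactly of the series $\sum_j c_j\xi^{j}$ with constant coefficients $c_j\in F$; this is the step where reducedness is used, to rule out nonconstant coefficients surviving in the commutator expansion. Such series form a subring of the field $F((\xi^{-1}))$ of formal Laurent series over the field $F$, hence a commutative integral domain, which proves part (1). As a byproduct, the leading coefficient of every nonzero $Q\in\cC_E(P)$ lies in $F$.

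For part (2) I would argue by order bookkeeping. Let $X$ be the set of residues modulo $n$ realized by orders of operators in $\cC_E(P)$, and for $i\in X$ fix $Q_i$ of minimal order with $\ord(Q_i)\equiv i\pmod n$, with $Q_0=1$. Given a nonzero $Q\in\cC_E(P)$ with $\ord(Q)=m\equiv i\pmod n$, the difference $m-\ord(Q_i)$ is a nonnegative multiple $kn$; since both leading coefficients lie in the field $F$, there is $c\in F$ with $Q-c\,P^{k}Q_i$ of strictly smaller order, still in $\cC_E(P)$. Induction on the order expresses $Q$ as an $F[P]$-linear combination of the $Q_i$, giving spanning. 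For freeness, in a relation $\sum_i f_i(P)Q_i=0$ the summands $f_i(P)Q_i$ have orders in distinct residue classes modulo $n$, so their leading terms cannot cancel against one another; hence each $f_i(P)Q_i=0$, and since $\cC_E(P)$ is a domain and $Q_i\ne 0$, each $f_i=0$.

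It remains to see $t:=|X|$ divides $n$. Because the product of two centralizer elements has order the sum of their orders, $X$ is closed under addition in $\bbZ/n\bbZ$ and contains $0$; a finite submonoid of the group $\bbZ/n\bbZ$ is a subgroup, so $X\le \bbZ/n\bbZ$ and $t\mid n$. I expect the main obstacle to be the construction and control of the $n$-th root over the reduced, non-domain ring $E$: in particular justifying that the centralizer of $\xi$ has only constant coefficients and that leading coefficients of centralizer elements are units of $F$. Once that is in place, parts (1) and (2) follow from the order estimates above.
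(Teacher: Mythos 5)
First, note that the paper does not prove this statement at all: Theorem \ref{thm-good} is imported verbatim from Goodearl's paper (\cite{Good}, Theorems 4.2 and 1.2) and is used as a black box, so there is no in-paper proof to compare against. Your sketch is the classical Schur-style argument (pass to $E[\partial^{-1}]$, extract an $n$-th root $\xi$ of $P$, trap $\cC_E(P)$ inside $F((\xi^{-1}))$, then do order bookkeeping for the module structure), which is indeed the standard route to results of this type; the order-bookkeeping in part (2) and the ``finite submonoid of $\bbZ/n\bbZ$ is a subgroup'' argument for $t\mid n$ are correct in outline.

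There are, however, two genuine gaps. (i) The step ``after adjoining an $n$-th root $\beta$ of $a_n$ if needed'' is not innocuous: the extension $E[x]/(x^n-a_n)$ need not be reduced, and its ring of constants need not be a field (already $F[x]/(x^n-c)$ for $c\in F$ can be a product of fields), so the key conclusion that the centralizer of $\xi$ sits inside the \emph{field} $F((\xi^{-1}))$ — and hence that $\cC_E(P)$ is a domain — can fail in the extended ring, and nothing is said about descending the conclusions back to $E$. The hypothesis ``$a_n$ invertible'' can instead be exploited directly: vanishing of the top coefficient of $[P,Q]$ gives $n\,a_n\partial(b_m)=m\,b_m\partial(a_n)$, whence $\partial\bigl(b_m^{\,n}a_n^{-m}\bigr)=0$ and $b_m^{\,n}a_n^{-m}\in F$; reducedness gives $b_m^{\,n}\neq 0$, so this constant is a unit and $b_m$ is invertible. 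This is where reducedness is actually needed — not, as you suggest, in the commutator expansion for $\cC(\xi)$. (ii) Your assertion that ``the leading coefficient of every nonzero $Q\in\cC_E(P)$ lies in $F$'' is false unless $P$ is monic; only $b_m^{\,n}a_n^{-m}\in F^\times$ holds. Consequently the elimination step in part (2) needs repair: for $\ord(Q)=\ord(Q_i)+kn$ one must check that $d=b_m\,(a_n^{\,k}b'_{\ord(Q_i)})^{-1}$ satisfies $d^{\,n}\in F^\times$, hence is invertible with $\partial(d)=0$, hence lies in $F$, before one can subtract $d\,P^kQ_i$. The same invertibility of leading coefficients is what guarantees that orders add under multiplication in $\cC_E(P)$, which your closure argument for $X$ quietly assumes. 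With these two points repaired the sketch becomes a correct proof.
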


We are ready now to describe the centralizer $\cC_K(L_s)$ in $K[\partial]$ of the KdV Schr\" odinger operator $L_s$. We do so by generalizing an example in \cite{Good}, Section 1.2. In addition, by \cite{CG}, Theorem 1.6 we know that $\cC_{K}(L_s)$ has rank $2$ as a free $C[L_s]$-module.

\para

Replacing $u$ by $u_s$ and $c^n$ by $\bar{c}_n^s=(\bar{c}^s,0,\ldots ,0)$ in the family of KdV differential operators $\hat{P}_{2n+1}(u,c^n)$ defined in \eqref{eq-Ac2s+1}, we obtain a family of differential operators in $K[\partial]$
\begin{equation}\label{eq-AA2n+1}
A_{2n+1}:=\hat{P}_{2n+1}(u_s,\bar{c}_n^s),\mbox{ for all }n\geq s.
\end{equation}
As a consequence of \eqref{eq-PLKdV} and Proposition \ref{prop-Hn} we have
\begin{equation}\label{eq-Laxcond}
[A_{2n+1},L_s]=\KdV_n(u_s,\bar{c}_n^s)=0,\mbox{ for all }n\geq s.
\end{equation}
Thus $A_{2n+1}\in \cC_K(L_s)$, for all $n\geq s$. The next result shows that $A_{2s+1}$ has an important role in the description of the centralizer of $L_s$, it is the {\it differential operator that determines the centralizer of} $L_s$.

\begin{thm}\label{thm-centralizer}
Let $L_s$ be a KdV Schr\" odinger operator.
The centralizer of $L_s$ in $K[\partial]$ equals the free $C[L_s]$-module of rank $2$ with basis $\{1, A_{2s+1}\}$, that is
\[\cC_K(L_s)=\{p_0(L_s)+p_1(L_s)A_{2s+1}\mid p_0,p_1\in C[L_s]\}=C[L_s]\langle 1,A_{2s+1}\rangle.\]
\end{thm}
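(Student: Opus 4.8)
The plan is to invoke Theorem \ref{thm-good} (the Goodearl structure theorem) with $E=K$, $F=C$, and $P=L_s=-\partial^2+u_s$, and then to pin down exactly which residues modulo $2$ occur, together with the minimal order in each residue class. First I would check that the hypotheses of Theorem \ref{thm-good} hold: $K$ is a field (hence a reduced differential ring) whose ring of constants $C$ is a field, the order $n=2$ of $L_s$ is invertible in $C$ (characteristic zero), and the leading coefficient $-1$ is invertible in $K$. Thus part (1) gives that $\cC_K(L_s)$ is a commutative integral domain, and part (2) gives that it is a free $C[L_s]$-module with basis $\{Q_i\mid i\in X\}$, where $X\subseteq\{0,1\}$ records the residues mod $2$ of orders of centralizing operators. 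Since $0\in X$ always with $Q_0=1$, and since we already know from \cite{CG}, Theorem 1.6, that the rank $t$ equals $2$, it follows that $X=\{0,1\}$ and there is a $Q_1$ of minimal \emph{odd} order generating the odd part of the module.

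The heart of the argument is therefore to show that $A_{2s+1}$ is precisely this minimal odd-order generator $Q_1$. By \eqref{eq-Laxcond} we already know $A_{2s+1}\in\cC_K(L_s)$, and it has odd order $2s+1$, so it lies in the residue class $i=1$; hence $\ord(Q_1)\le 2s+1$. The plan is to prove the reverse inequality: there is no operator of odd order strictly less than $2s+1$ in $\cC_K(L_s)$. I would do this by reducing any hypothetical odd-order element modulo $A_{2s+1}$ and $L_s$. Concretely, suppose $B\in\cC_K(L_s)$ has odd order $2m+1$ with $m<s$. Any operator commuting with $L_s$ whose leading coefficient and order I can control should, by the same recursive construction that produced the $\hat P_{2n+1}$, be expressible through the $v_n$ data; the key point is that an odd-order centralizer of order $2m+1$ forces $u_s$ to satisfy $\KdV_m(u_s,c^m)=0$ for some constants $c^m\in C^m$, i.e. it forces $\cH_m\neq\emptyset$. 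This contradicts the minimality of $s$ in Definition \ref{def-KdVpot}, since $s$ is the \emph{smallest} level with $\cH_s\neq\emptyset$.

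To make that reduction rigorous I would argue as follows. Given $B$ of odd order $2m+1$, its leading coefficient $b$ must be a constant: differentiating the commutation relation $[B,L_s]=0$ and reading off top-order terms shows $\partial(b)=0$, so $b\in C$ and (after scaling) I may assume $B$ is monic up to the normalization matching $A_{2m+1}$'s leading term. Then $B-A_{2m+1}$, where $A_{2m+1}:=\hat P_{2m+1}(u_s,c^m)$ for suitable constants, lies in $\cC_K(L_s)$ and has strictly smaller order; more importantly, the obstruction to $A_{2m+1}$ actually commuting with $L_s$ is exactly $\KdV_m(u_s,c^m)$ by \eqref{eq-PLKdV}, and the existence of a genuine odd-order centralizer at order $2m+1$ amounts to being able to kill this obstruction with a choice of constants, i.e. to $\cH_m\neq\emptyset$. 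The main obstacle I anticipate is this last equivalence: translating "$\cC_K(L_s)$ contains an element of odd order $2m+1$" into "$\cH_m\neq\emptyset$" cleanly, rather than merely plausibly. This requires knowing that every odd-order centralizer is, up to adding a $C[L_s]$-combination of lower terms, one of the $\hat P_{2m+1}$ specialized at constants — essentially that the $\hat P_{2m+1}$ exhaust the odd part of the centralizer module. I would establish this by downward induction on order, repeatedly subtracting the appropriate $A_{2m+1}$ (or a $C$-multiple of $L_s^{j}A_{2s+1}$) to lower the order while staying in the centralizer, using commutativity from part (1) of Theorem \ref{thm-good} to guarantee the subtractions remain in $\cC_K(L_s)$.

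Once minimality is secured, the conclusion assembles immediately: $X=\{0,1\}$, $Q_0=1$, and $Q_1=A_{2s+1}$, so by Theorem \ref{thm-good}(2) the centralizer is the free $C[L_s]$-module with basis $\{1,A_{2s+1}\}$, which is exactly the claimed description $\cC_K(L_s)=C[L_s]\langle 1,A_{2s+1}\rangle$. The commutativity of the ring, already noted, ensures the products $p_1(L_s)A_{2s+1}$ are unambiguous, completing the identification.
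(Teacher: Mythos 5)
Your proposal is correct and follows essentially the same route as the paper: invoke Goodearl's structure theorem and then rule out centralizing operators of odd order below $2s+1$ by expanding such an operator in the triangular family $\{P_{2i+1}^s\}\cup\{L_s^i\}$, showing that $[Q,L_s]=0$ forces all coefficients to be constants, and deriving $\sum_i q_{2i+1}\kdv_i(u_s)=0$ with leading coefficient $1$, i.e.\ $\cH_n\neq\emptyset$ for some $n<s$, contradicting the minimality of the KdV level. The paper organizes this reduction as a single division against that family followed by a direct commutator computation reading off the terms of each order, rather than your downward induction, but the content is the same.
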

\begin{proof}
We will prove that there does not exist an operator of odd order smaller that $2s+1$ in $\cC_K (L_s)$. By Theorem \ref{thm-good}, 2, this implies that $\cC_K (L_s)=C[L_s]\langle 1,A_{2s+1}\rangle$.

Let us consider a monic differential operator $Q\in K[\partial]$ of order $2n+1$ with $n<s$. Let $P_{2n+1}(u)$ be the family of KdV differential operators defined in \eqref{eq-A2s+1} and denote by $P_{2n+1}^s:=P_{2n+1}(u_s)$. Since $\{P_{2i+1}^s\}_{i\leq n}$ and $\{L_s^i\}_{i\leq n}$ are families of operators in $K[\partial]$ of odd and even orders less than $2n+1$ respectively, we divide $Q$ by those families and write
\[Q=\sum_{i=0}^n q_{2i+1} P_{2i+1}^s+\sum_{i=0}^n q_{2i} L_s^i\]
with $q_{2n+1}=1$ and $q_{2i+1},q_{2i}\in K$.
To compute $[Q,L_s]$, observe that $[a,L_s]=\partial^2 (a)+2\partial (a)\partial$, for $a\in K$ and
\[[q_{2i+1} P_{2i+1}^s,L_s]=-\partial^2(q_{2i+1})P_{2i+1}^s-2\partial(q_{2i+1})\partial P_{2i+1}^s+q_{2i+1}\kdv_i(u_s)\]
and
\[[q_{2i}L_s^i,L_s]=[q_{2i},L_s]L_s^i=(\partial^2(q_{2i})+2\partial(q_{2i})\partial)L_s^i.\]
Thus in $[Q,L_s]$ the only term of order $2i+2$ is the leading term of $\partial P_{2i+1}^s$ and the only term of order $2i+1$ is the leading term of $\partial L_s^i$.
If $[L_s,Q]=0$ then $\partial(q_{2i})=0$ and $\partial(q_{2i+1})=0$. Therefore $[q_{2i}L_s^i,L_s]=0$ and $q_{2i+1}\in C$, $i=0,\ldots ,n$ implies that
\[0=[Q,L_s]=\sum_{i=0}^n q_{2i+1} \kdv_i(u_s)\]
contradicting that $u_s$ has KdV level $s$. We conclude that $Q\notin \cC_K(L_s)$, which proves the result.
\end{proof}

\para

A polynomial $f(\lambda,\mu)$ with constant coefficients satisfied by a commuting pair of differential operators $P$ and $Q$ is called a {\it Burchnall-Chaundy  (BC) polynomial} of $P$ and $Q$, since the first result of this sort appeared is the $1923$ paper \cite{BC} by Burchnall and Chaundy. Generalizations (more general rings $E$) were later studied in \cite{K77}, \cite{CG} and \cite{Richter}. The next result shows that associated to the centralizer of a differential operator $P$ there are as many BC polynomials as operators in the centralizer. We will compute these polynomials using differential resultants, as it will be explained in Section \ref{sec-Differential Resultant}.

\begin{thm}
(\cite{Good}, Theorem 1.13) Let $E$ be a reduced differential ring whose subring $F$ of constants is a field.
Given any operator $Q\in C_E(P)$ there exist polynomials $p_0(P)$,$\ldots$ ,$p_{t-1}(P)$$\in F[P]$ such that
\[p_0(P)+p_1(P)Q+\cdots +p_{t-1}(P)Q^{t-1}+Q^t=0.\]
That is, there exists a nonzero polynomial $f_Q(\lambda,\mu)\in F[\lambda,\mu]$ such that $f_Q(P,Q)=0$.
\end{thm}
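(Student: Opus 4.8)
The plan is to exploit the module structure of the centralizer supplied by Theorem \ref{thm-good} and reduce the assertion to the Cayley--Hamilton theorem over the coefficient ring $F[P]$. First I would invoke Theorem \ref{thm-good}, 2, to present $\cC_E(P)$ as a free $F[P]$-module of finite rank $t$ with basis $\{Q_i\mid i\in X\}$, noting that $0\in X$ and $Q_0=1$, and Theorem \ref{thm-good}, 1, to record that $\cC_E(P)$ is a commutative integral domain. Since $P$ has positive order $n$ with invertible leading coefficient, the powers $P^k$ have strictly increasing orders $nk$ and hence are $F$-linearly independent, so $F[P]$ is a genuine polynomial ring over $F$; this lets one read the coefficients produced below as honest polynomials in $P$.

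Next, given $Q\in\cC_E(P)$, I would study multiplication by $Q$, the map $\mu_Q\colon\cC_E(P)\to\cC_E(P)$ sending $R\mapsto QR$. Because $\cC_E(P)$ is commutative and contains $F[P]$, the map $\mu_Q$ is $F[P]$-linear and carries the module into itself. Expressing $\mu_Q$ in the basis $\{Q_i\mid i\in X\}$ yields a $t\times t$ matrix $M_Q$ with entries in $F[P]$, defined by $QQ_i=\sum_{j\in X}(M_Q)_{ji}\,Q_j$.

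Then I would apply Cayley--Hamilton over the commutative ring $F[P]$: the characteristic polynomial $\chi_Q(x)=\det(xI-M_Q)\in F[P][x]$ is monic of degree $t$ and satisfies $\chi_Q(M_Q)=0$. The determinant trick transfers this matrix identity to the module, since $\mu_Q$ is represented by $M_Q$ and the assignment $R\mapsto\mu_R$ is a ring homomorphism: the endomorphism $\chi_Q(\mu_Q)=\mu_{\chi_Q(Q)}$ is represented by $\chi_Q(M_Q)=0$, whence $\chi_Q(Q)\cdot Q_i=0$ for every $i\in X$. Evaluating at $Q_0=1$ gives $\chi_Q(Q)=0$ in $\cC_E(P)$. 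Writing $\chi_Q(x)=x^t+p_{t-1}(P)x^{t-1}+\cdots+p_0(P)$ produces the stated relation, and setting $f_Q(\lambda,\mu)=\mu^t+p_{t-1}(\lambda)\mu^{t-1}+\cdots+p_0(\lambda)\in F[\lambda,\mu]$ gives a polynomial that is nonzero (being monic in $\mu$) and satisfies $f_Q(P,Q)=0$.

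The main obstacle is the careful justification of the determinant trick in an a priori noncommutative environment: although $E[\partial]$ itself is noncommutative, every computation above is performed inside the commutative subring $\cC_E(P)$, so $\mu_Q$ is truly $F[P]$-linear and the classical Cayley--Hamilton theorem over the commutative ring $F[P]$ applies verbatim. A secondary point worth flagging is the role of $Q_0=1$ among the generators, which is precisely what allows the passage from ``$\chi_Q(Q)$ annihilates every basis element'' to ``$\chi_Q(Q)=0$''.
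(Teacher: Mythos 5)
Your argument is correct. The paper does not prove this statement at all---it is quoted verbatim from \cite{Good}, Theorem 1.13---and your derivation (Cayley--Hamilton/determinant trick applied to multiplication by $Q$ on the rank-$t$ free $F[P]$-module $\cC_E(P)$ furnished by Theorem \ref{thm-good}, with $Q_0=1$ used to pass from annihilation of the basis to $\chi_Q(Q)=0$) is exactly the standard integrality argument underlying Goodearl's proof, so there is nothing to object to.
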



\subsection{Computing spectral curves}\label{sec-Spectral curves}

The relation between Burchnall and Chaundy polynomials (see Section \ref{subsec-Centralizers}) and differential resultants was given by E. Previato in \cite{Prev}. Next, we state Previato's theorem in the general case of differential operators in $K[\partial]$ (\ref{thm-Prev}) and we give an alternative proof using the Poisson formula for the differential resultant (Proposition \ref{prop-Poisson}). Then we will compute BC polynomials of KdV Schr\" odinger operators. We we will apply Previato's Theorem \ref{thm-Prev} to the computation of the spectral curve of the Lax pair $\{L_s,A_{2s+1}\}$, showing the algebraic structure of the irreducible polynomials  $f_s(\lambda,\mu)$ defining the spectral curve $\Gamma_s$.

\para

First observe that whenever the operators $P-\lambda$ and $Q-\mu$ have coefficients in the differential ring $(K[\lambda,\mu],\partial)$ (see Section \ref{sec-notation} ),
by means of the differential resultant, Proposition \ref{prop-Propertiesdres}, 1, it is ensured that we compute a nonzero polynomial,
\begin{equation}\label{eq-dresnonzero}
\dres(P-\lambda,Q-\mu)=a_n^m\mu^n-b_m^n\lambda^m+\cdots
\end{equation}
in the elimination ideal $(P-\lambda,Q-\mu)\cap K[\lambda,\mu]$. The next result implies that if $P$ and $Q$ commute then
\[\dres(P-\lambda,Q-\mu)\in (P-\lambda,Q-\mu)\cap C[\lambda,\mu].\]

\begin{thm}[E. Previato, \cite{Prev}]\label{thm-Prev}
Given $P,Q\in K[\partial]$ such that $[P,Q]=0$ then
$$g(\lambda,\mu)=\dres(P-\lambda,Q-\mu)\in C[\lambda,\mu]$$
and also $g(P,Q)=0$.
\end{thm}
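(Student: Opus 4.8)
The plan is to combine the Poisson formula of Proposition~\ref{prop-Poisson} with the elementary but decisive remark that commutation forces $Q$ to preserve the eigenspaces of $P$. First I would pass to a differential field extension of $K(\lambda)$ containing a fundamental system $y_1,\ldots,y_n$ of solutions of $(P-\lambda)y=0$; these span, over the field of constants, the $n$-dimensional solution space $\ker(P-\lambda)$. The whole argument then reduces to analysing how $Q-\mu$ acts on this space.

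The key observation is that $[P,Q]=0$ implies $Q(\ker(P-\lambda))\subseteq \ker(P-\lambda)$: if $(P-\lambda)y_i=0$ then $(P-\lambda)(Qy_i)=Q(P-\lambda)y_i=0$, so we may write $Qy_i=\sum_j N_{ji}y_j$ for a matrix $N=(N_{ji})$ whose entries are constants. Hence $(Q-\mu)y_i=\sum_j (N-\mu I)_{ji}\,y_j$, and because the coefficients $N_{ji}-\mu\delta_{ji}$ are constants, the Wronskian matrix transforms as $W((Q-\mu)y_1,\ldots,(Q-\mu)y_n)=W(y_1,\ldots,y_n)\,(N-\mu I)$. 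Taking determinants and inserting the result into the Poisson formula yields the single master identity
$$\dres(P-\lambda,Q-\mu)=(-1)^{nm}a_n^m\det(N-\mu I),$$
that is, the differential resultant is, up to the leading factor $a_n^m$, the characteristic polynomial in $\mu$ of the constant matrix $N$ representing $Q$ on $\ker(P-\lambda)$.

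Both assertions drop out of this identity. For membership in $C[\lambda,\mu]$, the coefficients of $\det(N-\mu I)$ as a polynomial in $\mu$ are the elementary symmetric functions of the eigenvalues of the constant matrix $N$, hence are annihilated by $\partial$; combined with the a priori fact from Proposition~\ref{prop-Propertiesdres} that $g$ is a polynomial in $\lambda,\mu$ over $K$, and with $a_n$ being constant in the normalization at hand (leading coefficient $\pm 1$ for $L-\lambda$ and for $A_{2s+1}-\mu$), this forces $g\in C[\lambda,\mu]$. For $g(P,Q)=0$, fix $\lambda_0\in C$ and $y\in\ker(P-\lambda_0)$; since $P$ acts as the scalar $\lambda_0$ and $Q$ as the matrix $N(\lambda_0)$ on the $P$- and $Q$-invariant space $\ker(P-\lambda_0)$, the operator $g(P,Q)$ restricts there to the matrix $g(\lambda_0,N(\lambda_0))$. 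By the master identity this matrix is $(-1)^{nm}a_n^m$ times the characteristic polynomial of $N(\lambda_0)$ evaluated at $N(\lambda_0)$ itself, which vanishes by the Cayley--Hamilton theorem. Thus $g(P,Q)$ annihilates $\ker(P-\lambda_0)$ for every $\lambda_0\in C$; letting $\lambda_0$ range over the infinite algebraically closed field $C$ produces more linearly independent functions in $\ker g(P,Q)$ than the fixed order of the operator $g(P,Q)$ permits, whence $g(P,Q)=0$.

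The step I expect to be the main obstacle is the field-of-constants bookkeeping underlying the Poisson formula: one must set up a Picard--Vessiot-type extension in which the $y_i$ and the entries $N_{ji}$ live, and verify that the constants produced are genuinely those of $K[\lambda,\mu]$, so that \emph{constant coefficients} really means coefficients in $C[\lambda,\mu]$. A subordinate but necessary point is the leading-coefficient factor $a_n^m$ in the master identity: the clean conclusion $g\in C[\lambda,\mu]$ relies on $a_n$ being a constant, as it is for the Schr\"odinger and KdV operators of interest, whereas for a general $P$ only $a_n^{-m}g$ need have constant coefficients.
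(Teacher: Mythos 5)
Your argument is correct, and it is worth separating its two halves. For the first claim, $g\in C[\lambda,\mu]$, you follow essentially the same route as the paper: a fundamental system $y_1,\ldots,y_n$ of $(P-\lambda)y=0$, the observation that $[P,Q]=0$ makes $Q-\mu$ preserve the solution space, the resulting Wronskian identity $W((Q-\mu)y_1,\ldots,(Q-\mu)y_n)=W(y_1,\ldots,y_n)M$, and the Poisson formula of Proposition~\ref{prop-Poisson}. Your refinement of $M$ to $N-\mu I$ with $N$ constant, and your insistence on the factor $(-1)^{nm}a_n^m$, are actually more careful than the paper's proof, which quotes its own Poisson formula without that factor; as you correctly point out, the statement $g\in C[\lambda,\mu]$ genuinely needs $a_n^m$ constant (the $\mu^n$-coefficient of $g$ is $a_n^m$ by \eqref{eq-dresnonzero}), which holds for the Schr\"odinger operators in play but not for arbitrary commuting pairs, so your caveat is a legitimate hypothesis the theorem tacitly assumes. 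For the second claim, $g(P,Q)=0$, your route is genuinely different: the paper invokes Proposition~\ref{prop-Propertiesdres} to write $g=A(P-\lambda)+B(Q-\mu)$ and then asserts the substitution $\lambda\mapsto P$, $\mu\mapsto\nobreak Q$ kills the right-hand side, an argument that is terse because that substitution is not a ring homomorphism of $K[\lambda,\mu][\partial]$ and needs justification; you instead identify $g(\lambda_0,\mu)$ with (a constant multiple of) the characteristic polynomial of $N(\lambda_0)$, apply Cayley--Hamilton on each $\ker(P-\lambda_0)$, and conclude by the dimension bound on kernels of nonzero operators as $\lambda_0$ ranges over $C$. Your version buys rigor at the second step and makes the Burchnall--Chaundy mechanism transparent; its cost is exactly the Picard--Vessiot bookkeeping you flag (a common differential extension with constants the algebraic closure of $C(\lambda,\mu)$, resp.\ $C$, in which all the $y_i$ and the entries of $N$ live), which is standard and does not hide any obstruction.
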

\begin{proof}
Let $y_1,\ldots ,y_n$ be a fundamental system of solutions of $(P-\lambda)(Y)=0$. Since $0=[P,Q]=[P-\lambda,Q-\mu]$ we have $(P-\lambda)(Q-\mu)(y_i)=(Q-\mu)(P-\lambda)(y_i)=0$ then $(Q-\mu)(y_i)$, $i=1,\ldots ,n$ are solutions of $(P-\lambda)(Y)=0$. Then, there exists a matrix $M$ with entries in the algebraic closure  $\mathfrak{C}$ of $C(\lambda,\mu)$ such that there exists a matrix $M$ with entries in $\mathfrak{C}$ such that
\[W((Q-\mu)(y_1),\ldots ,(Q-\mu)(y_n))=W(y_1,\ldots,y_n) M.\]
By Proposition \ref{prop-Poisson},
\[\dres(P-\lambda,Q-\mu)=\frac{w((Q-\mu)(y_1),\ldots ,(Q-\mu)(y_n))}{w(y_1,\ldots ,y_n)}=\frac{w(y_1,\ldots,y_n) \det(M)}{w(y_1,\ldots ,y_n)}=\det(M),\]
which belongs to $K[\lambda,\mu]\cap \mathfrak{C}=C[\lambda,\mu]$.

The last statement of this theorem follows from the fact that $g(\lambda,\mu)=\dres(P-\lambda,Q-\mu)$ belongs to the differential ideal generated by $P-\lambda$ and $Q-\mu$ in $K[\lambda,\mu][\partial]$. Therefore
\[g(\lambda,\mu)=A(P-\lambda)+B(Q-\mu), \mbox{ with }A,B\in K[\lambda,\mu][\partial].\]
Since $P$ and $Q$ commute then $g(P,Q)=0$.
\end{proof}

The previous theorem shows that BC polynomials (defined in Section \ref{subsec-Centralizers}) can be computed using differential resultants.
Let us suppose that $[P,Q]=0$ and let $f(\lambda,\mu)$ be the square free part of $\dres(P-\lambda,Q-\mu)\in C[\lambda,\mu]$ (i.e. the product of the different irreducible components of $g$). The affine plane algebraic curve defined by $f$
\begin{equation}
\Gamma:=\{(\lambda,\mu)\in C^2\mid f(\lambda,\mu)=0\}
\end{equation}
is known as the {\it spectral curve of the pair $\{P,Q\}$}.

\para

Let us suppose that $f(\lambda,\mu)$ is an irreducible polynomial in $K[\lambda,\mu]$ and denote by $(f)$ the prime ideal generated by $f$ in $K[\lambda,\mu]$. As a polynomial in $C[\lambda,\mu]$ is also irreducible and the ideal generated by $f$ in $C[\lambda,\mu]$ is also prime, abusing the notation we will also denote it by $(f)$ and distinguish it by the context. Let us denote by $C(\Gamma)$ and $K(\Gamma)$ the fraction fields of the domains $C[\lambda,\mu]/ (f)$ and $K[\lambda,\mu]/(f)$ respectively. Observe that $C(\Gamma)$ and $K(\Gamma)$ are usually interpreted as rational function on $\Gamma$.

\begin{rem}\label{rem-Importante}
As differential operators in $K[\lambda,\mu][\partial]$, the operators $P-\lambda$ and $Q-\mu$ have no common nontrivial solution, see \eqref{eq-dresnonzero}, but as elements of $K(\Gamma)[\partial]$ they have a common non constant factor. By Theorem \ref{thm-subres} the first nonzero subresultant  $\cL_r=\gcrd(P-\lambda,Q-\mu)$ is the greatest common divisor of $P-\lambda$ and $Q-\mu$ in $K(\Gamma)[\partial]$. We will use subresultants in Section \ref{sec-KdV factors} to compute factorizations of KdV Schr\" odinger operators.
\end{rem}

\para

Let us consider the KdV Schr\" odinger operators $L_s=L(u_s)=-\partial^2+u_s$, where $u_s$ is a KdV potential of KdV level $s$ and basic constants vector $\bar{c}^s$, as defined in Section \ref{subsec-flag}. Let $A_{2s+1}$ be the differential operator that determines the centralizer of $L_s$, see \eqref{eq-AA2n+1}.

\begin{cor}\label{cor-Prev}
The spectral curve $\Gamma_s$ of the pair $\{L_s,A_{2s+1}\}$ is defined by the polynomial in $C[\lambda,\mu]$,
$$f_s(\lambda,\mu):=\dres(L_s-\lambda,A_{2s+1}-\mu)=-\mu^2-R_{2s+1}(\lambda),$$
where $R_{2s+1}(\lambda)$ is a polynomial of degree $2s+1$ in $C[\lambda]$. The polynomial $f_s(\lambda,\mu)$ is irreducible in $K[\lambda,\mu]$. In addition, the coefficients of $R_{2s+1}(\lambda)(u,\bar{c}^s,\lambda)$ in Proposition \ref{prop-c1} are first integrals of $KdV_s(u,\bar{c}^s)$.
\end{cor}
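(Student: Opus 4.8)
The plan is to handle the three assertions in turn: the shape and degree of $f_s$ follow from results already at hand, irreducibility reduces to a parity argument, and the first-integral statement is proved by a Wronskian computation resting on the Poisson formula. I expect this last part to be the main obstacle.

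Since $[A_{2s+1},L_s]=0$ by \eqref{eq-Laxcond}, Previato's Theorem~\ref{thm-Prev} applies and gives $f_s=\dres(L_s-\lambda,A_{2s+1}-\mu)\in C[\lambda,\mu]$ together with $f_s(L_s,A_{2s+1})=0$. The differential resultant is the determinant of a matrix whose entries are differential polynomials in the coefficients, so it commutes with the substitution $u=u_s$, $c^s=\bar{c}^s$; hence the formal formula in Remark~\ref{prop-c1}(1) specializes to $f_s=-\mu^2+R_{2s+1}(u_s,\bar{c}^s,\lambda)$, where $\mu$ does not occur in the second summand. Comparing with $f_s\in C[\lambda,\mu]$ forces $R_{2s+1}(u_s,\bar{c}^s,\lambda)\in C[\lambda]$, and setting $R_{2s+1}(\lambda):=-R_{2s+1}(u_s,\bar{c}^s,\lambda)$ yields $f_s=-\mu^2-R_{2s+1}(\lambda)$. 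The degree is read off from \eqref{eq-dresnonzero} with $n=\ord(L_s)=2$ and $m=\ord(A_{2s+1})=2s+1$: the coefficient of $\lambda^{2s+1}$ is $-b_m^2\neq 0$, so $\deg_\lambda R_{2s+1}=2s+1$.

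For irreducibility in $K[\lambda,\mu]$, view $f_s$ as a polynomial in $\mu$ over $K[\lambda]$: up to the unit $-1$ it is the monic quadratic $\mu^2+R_{2s+1}(\lambda)$, which is primitive. By Gauss's Lemma it is irreducible in $K[\lambda,\mu]$ iff it is irreducible in $K(\lambda)[\mu]$, and a monic quadratic is reducible there exactly when $-R_{2s+1}(\lambda)$ is a square in $K(\lambda)$. But a square in $K(\lambda)$ has even degree, whereas $\deg R_{2s+1}=2s+1$ is odd; hence $f_s$ is irreducible.

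The main work is the first-integral statement, where I keep $u$ a differential indeterminate and only specialize $c^s=\bar{c}^s$. Write $Q=\hat{P}_{2s+1}-\mu$ and take a fundamental system $y_1,y_2$ of $(L-\lambda)Y=0$ over a Picard--Vessiot extension. By the Poisson formula (Proposition~\ref{prop-Poisson}), $g:=\dres(L-\lambda,Q)=-\,w(Qy_1,Qy_2)/w(y_1,y_2)$, and since $L-\lambda$ has no first-order term, $\partial\big(w(y_1,y_2)\big)=0$ (Abel's identity). From $[Q,L]=[\hat{P}_{2s+1},L]=\KdV_s$ (see \eqref{eq-PLKdV}) and $Ly_i=\lambda y_i$ one obtains $\partial^2(Qy_i)=(u-\lambda)\,Qy_i+\KdV_s\,y_i$. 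Differentiating the $2\times 2$ Wronskian, the $(u-\lambda)$ contributions cancel and
\[
\partial\big(w(Qy_1,Qy_2)\big)=\KdV_s\,\big(y_2\,Qy_1-y_1\,Qy_2\big).
\]
Finally I reduce the bilinear concomitant: using $\partial^2 y_i=(u-\lambda)y_i$ repeatedly, every derivative $\partial^k y_i$ rewrites as $(\cdots)\,y_i+(\cdots)\,y_i'$ with coefficients in $C\{u\}[c^s,\lambda]$ that are independent of $i$, so $\hat{P}_{2s+1}y_i=\alpha\,y_i+\beta\,y_i'$ with $\alpha,\beta\in C\{u\}[c^s,\lambda]$. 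The term $\mu y_i$ in $Qy_i$ cancels, and $y_1\,Qy_2-y_2\,Qy_1=\beta\,w(y_1,y_2)$; combining with the previous display and $\partial(-\mu^2)=0$,
\[
\partial\big(R_{2s+1}(u,c^s,\lambda)\big)=\partial(g)=\beta(u,c^s,\lambda)\,\KdV_s(u,c^s).
\]
Reading off the coefficient of each $\lambda^j$ and specializing $c^s=\bar{c}^s$ shows that the derivative of every coefficient of $R_{2s+1}(u,\bar{c}^s,\lambda)$ is a differential-polynomial multiple of $\KdV_s(u,\bar{c}^s)$, i.e.\ these coefficients are first integrals of $\KdV_s(u,\bar{c}^s)=0$. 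The delicate points, and the main obstacle, are the two cancellations (of the $(u-\lambda)$ part and of $\mu$) and the observation that the reduction $\hat{P}_{2s+1}y_i=\alpha y_i+\beta y_i'$ has $i$-independent coefficients, which is precisely what makes $\partial(g)/\KdV_s$ a genuine element of $C\{u\}[c^s,\lambda]$ rather than merely a ratio.
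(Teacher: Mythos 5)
Your proof is correct, and for the first two assertions it follows the paper's own route: commutativity from \eqref{eq-Laxcond} feeds Theorem \ref{thm-Prev} to get $f_s\in C[\lambda,\mu]$, the shape $-\mu^2$ plus a $\lambda$-polynomial comes from specializing Remark \ref{prop-c1}, and irreducibility is the odd-degree observation (the paper states it in one clause; your Gauss-lemma/parity argument is exactly the intended justification, and your reading of the degree $2s+1$ from \eqref{eq-dresnonzero} is what the paper implicitly uses). Where you genuinely diverge is the third assertion: the paper's proof simply does not address the claim that the coefficients of $R_{2s+1}(u,\bar{c}^s,\lambda)$ are first integrals of $\KdV_s(u,\bar{c}^s)$ — it is stated in the corollary and left unproved. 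Your Wronskian computation supplies it: Abel's identity makes $w(y_1,y_2)$ constant because $L-\lambda$ has no first-order term, the intertwining $L(Qy_i)=\lambda Qy_i-\KdV_s\,y_i$ gives $\partial\bigl(w(Qy_1,Qy_2)\bigr)=\KdV_s\,(y_2Qy_1-y_1Qy_2)$, and the reduction $\hat{P}_{2s+1}y_i=\alpha y_i+\beta y_i'$ modulo $L-\lambda$ (with $\alpha,\beta$ independent of $i$) collapses the bilinear term to $-\beta\,w(y_1,y_2)$, yielding $\partial\bigl(R_{2s+1}(u,c^s,\lambda)\bigr)=\beta\,\KdV_s(u,c^s)$ with $\beta\in C\{u\}[c^s,\lambda]$. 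This is consistent with the paper's formal $s=1$ example, where $\partial(p_1)=\KdV_1$ and $\partial(p_0)=(u/2+c_1)\KdV_1$, i.e.\ $\beta=\lambda+u/2+c_1$, which is precisely $\varphi_2$; so your argument also explains why the multiplier is the leading coefficient of the first subresultant. What your approach buys is a complete proof of the only nontrivial part of the corollary; what it costs is the need to work over a Picard--Vessiot extension of the formal differential field to invoke the Poisson formula, the same device the paper already uses to prove Theorem \ref{thm-Prev}, so nothing new is assumed.
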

\begin{proof}
By \eqref{eq-Laxcond}, $[A_{2n+1},L_s]=\KdV_n(u_s,\bar{c}_n^s)=0$. Thus by Theorem \ref{thm-Prev}, $f_s\in C[\lambda,\mu]$. In addition, by Remark \ref{prop-c1}, $f_s=-\mu^2-R_{2s+1}(\lambda)$, which can be easily proved to be irreducible in $K[\lambda,\mu]$ because it has odd degree in $\lambda$.
\end{proof}

\begin{defi}
{\rm The spectral curve of $L_s$} is defined as the plain algebraic curve $\Gamma_s$ given by $f_s(\lambda,\mu)= 0$, with $f_s$ as defined in Corollary \ref{cor-Prev}.
\end{defi}

\section{ Factors of KdV Schr\" odinger operators  over spectral curves}\label{sec-KdV factors}

Let $u_s$ be a KdV potential of KdV level $s$ and basic  constants vector $\bar{c}^s$. Let $A_{2s+1}$ be the differential operator that determines the centralizer of the KdV Schr\" odinger $L_s=-\partial^2+u_s$ as in Theorem \ref{thm-centralizer}. By Corollary \ref{cor-Prev}, the spectral curve $\Gamma_s$ of the pair $\{L_s,A_{2s+1}\}$ is defined by the irreducible polynomial
$$f_s (\lambda,\mu)=\dres(L_s-\lambda,A_{2s+1}-\mu)=\mu^2-R_{2s+1}(\lambda)\in C[\lambda,\mu].$$
Let $C(\Gamma_s)$ and $K(\Gamma_s)$ be the fraction fields of the domains  $C[\lambda,\mu]/(f_s)$ and $K[\lambda,\mu]/(f_s)$. In this section, we explain how to factor $L_s-\lambda$ as an operator in $K(\Gamma_s)[\partial]$.

\subsection{KdV factors on $\Gamma_s$}\label{subsec-KdVfactors}

In this section, we consider the operators $L_s-\lambda$ and $A_{2s+1}-\mu$ as elements of $K(\Gamma_s)[\partial]$.
Let $\cL_1=\varphi_2 \partial+\varphi_1$ be the subresultant of $L_s-\lambda$ and $A_{2s+1}-\mu$ as in Section \ref{subsec-subres}.

\begin{thm}\label{thm-L1}
The greatest common factor of the differential operators $L_s-\lambda$ and $A_{2s+1}-\mu$ in $K(\Gamma_s)[\partial]$ is the order one operator $\cL_1$.
\end{thm}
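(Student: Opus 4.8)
The plan is to use the Differential Subresultant Theorem (Theorem~\ref{thm-subres}) together with Previato's computation of the spectral curve (Corollary~\ref{cor-Prev}) to pin down the order of the greatest common factor exactly. First I would recall that $L_s-\lambda$ has order $2$ and $A_{2s+1}-\mu$ has order $2s+1$, so the subresultant sequence consists of the operators $\cL_0,\cL_1$, since $N=\min\{2,2s+1\}-1=1$. By construction $\cL_0=\dres(L_s-\lambda,A_{2s+1}-\mu)=f_s(\lambda,\mu)$, the defining polynomial of $\Gamma_s$. The key observation is that when we pass to the field $K(\Gamma_s)$, the class of $f_s$ is zero by definition of the quotient ring $K[\lambda,\mu]/(f_s)$; hence $\cL_0=0$ in $K(\Gamma_s)[\partial]$.

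Next I would invoke Theorem~\ref{thm-subres}: the vanishing of $\cL_0$ forces $r\geq 1$, so the greatest common right divisor $\gcrd(L_s-\lambda,A_{2s+1}-\mu)$ has order at least $1$. On the other hand, since $L_s-\lambda$ has order $2$, the common divisor cannot have order $2$ unless $L_s-\lambda$ divides $A_{2s+1}-\mu$ on the right; but $A_{2s+1}-\mu$ has odd order $2s+1$ while $L_s-\lambda$ has even order $2$, and a right divisor of order $2$ would force $2\mid(2s+1)$ on leading-term degrees, which is impossible. More directly, the order $r$ of the gcrd is bounded by $N+1=1$ from the structure of the subresultant sequence together with $r\geq 1$, so $r=1$ exactly. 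Therefore, by the conclusion of Theorem~\ref{thm-subres}, $\gcrd(L_s-\lambda,A_{2s+1}-\mu)=\cL_1$, which is the order one operator $\varphi_2\partial+\varphi_1$.

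Finally I would need to check that $\cL_1$ is genuinely of order $1$ in $K(\Gamma_s)[\partial]$, i.e.\ that its leading coefficient $\varphi_2=\det(S_1^1)$ does not vanish modulo $f_s$. By Remark~\ref{prop-c1}, item~2, $\varphi_2$ is a polynomial in $C\{u\}[c^s,\lambda]$; after specializing $u=u_s$ and $c^s=\bar c^s$ it becomes an element of $K[\lambda]$, which reduces in $K(\Gamma_s)$ to a nonzero element precisely when it is not a multiple of $f_s$. Since $\varphi_2$ has degree in $\lambda$ strictly smaller than the degree $2s+1$ of $f_s$ in $\lambda$, and $f_s$ is irreducible, $\varphi_2$ cannot be divisible by $f_s$ unless it is identically zero; a direct degree count (or the explicit leading behaviour $\varphi_2=\lambda^s+\cdots$ inherited from the top coefficient of $A_{2s+1}$) shows $\varphi_2\neq 0$. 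The main obstacle is exactly this last nonvanishing verification: one must confirm that $\cL_1$ does not secretly collapse to order $0$ over the curve, which is what guarantees that the common factor has order $1$ rather than $0$. Once this is settled, the theorem follows immediately from Theorem~\ref{thm-subres}.
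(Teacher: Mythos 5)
Your overall strategy is the paper's: observe that $\cL_0=\dres(L_s-\lambda,A_{2s+1}-\mu)=f_s$ vanishes in $K(\Gamma_s)$, invoke Theorem~\ref{thm-subres}, and then certify that $\cL_1$ is a genuine order-one operator. The paper's own proof is essentially your first and third steps (it states the vanishing of $\cL_0$ and defers the nonvanishing of $\cL_1$ to Lemma~\ref{lem-phi}). However, your middle paragraph contains a step that is simply false: you claim that a right divisor of order $2$ of $A_{2s+1}-\mu$ would force $2\mid(2s+1)$ ``on leading-term degrees.'' In $K(\Gamma_s)[\partial]$ orders add under composition, so $A_{2s+1}-\mu=Q\,(L_s-\lambda)$ with $\ord(Q)=2s-1$ is perfectly consistent; there is no parity obstruction. (Ruling this case out genuinely requires an argument — the paper does it at specialized points in Proposition~\ref{prop-spefact} via the centralizer Theorem~\ref{thm-centralizer}, or one can do it by exhibiting $\cL_1\neq 0$.) Your fallback, ``$r$ is bounded by $N+1=1$,'' also misfires: $N+1=2$ here, and the a priori bound on $r$ is $\min\{2,2s+1\}=2$, so nothing in the structure of the subresultant sequence alone excludes $r=2$.

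Fortunately these two faulty steps are superfluous: your final paragraph closes the argument correctly. Once you know $\cL_1\neq 0$ in $K(\Gamma_s)[\partial]$, Theorem~\ref{thm-subres} (with $\cL_0=0$, $\cL_1\neq 0$) gives $r=1$ and $\gcd=\cL_1$ directly, with no need to separately exclude $r=2$. Your nonvanishing check via $\varphi_2=\det(S_1^1)$ works in principle but leaves the key point (``a direct degree count shows $\varphi_2\neq 0$'') asserted rather than proved. The paper's route is cleaner: by Remark~\ref{prop-c1}, $\varphi_1=\det(S_1^0)=-\mu-\alpha$ with $\alpha\in K[\lambda]$, and since $f_s$ has degree $2$ in $\mu$ it cannot divide the degree-one polynomial $\mu+\alpha$, so $\varphi_1\neq 0$ in $K(\Gamma_s)$ and hence $\cL_1\neq 0$ with no computation of leading coefficients. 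I recommend deleting the parity argument and the $N+1$ bound, and replacing your $\varphi_2$ verification by this $\varphi_1$ argument (or completing the $\varphi_2$ degree count honestly).
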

\begin{proof}
Since $\cL_0=\dres(L_s-\lambda,A_{2s+1}-\mu)$ is zero in $K(\Gamma_s)$ by Theorem \ref{thm-subres} the result follows.
\end{proof}
We can take the monic greatest common factor of $L_s-\lambda$ and $A_{2s+1}-\mu$ to be
\[\partial-\phi_s,\mbox{ where }
\phi_s=-\frac{\varphi_1}{\varphi_2}.\]
The fact that $\partial-\phi_s$ is a right factor implies that
\[L_s-\lambda=(-\partial-\phi_s)(\partial-\phi_s),\mbox{ in }K(\Gamma_s)[\partial]\]
and moreover $\phi_s$ is a solution of the Ricatti equation associated to the Sch\" odinger operator $L_s-\lambda$
\begin{equation}\label{eq-Ricatti}
\partial(\phi)+\phi^2=u_s-\lambda
\end{equation}
on the spectral curve $\Gamma_s$. Therefore, we compute a solution of \eqref{eq-Ricatti} by means of the differential subresultant $\cL_1$. We will give next some details about $\phi_s$.

\begin{lem}\label{lem-phi}
The following formula holds
\begin{equation}\label{eq-phis}
\phi_s=\frac{\mu+\alpha(\lambda)}{\varphi(\lambda)},
\end{equation}
where $\alpha$ and $\varphi$ are nonzero polynomials in $K[\lambda]$. Moreover $\phi_s$ is nonzero in $K(\Gamma_s)$.
\end{lem}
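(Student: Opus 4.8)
The plan is to read the formula straight off Remark~\ref{prop-c1} and then settle the three nonvanishing assertions one at a time, saving $\alpha\neq0$ for last since it is the only one with real content.

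First I would specialize. By Remark~\ref{prop-c1}, 2, the determinant $\varphi_2=\det(S_1^1)$ is free of $\mu$, and by Remark~\ref{prop-c1}, 3, we have $\varphi_1=\det(S_1^0)=-\mu-\alpha$ with $\alpha$ also free of $\mu$; both lie in $C\{u\}[c^s,\lambda]$. Replacing $u$ by $u_s$ and $c^s$ by $\bar{c}^s$ turns $\varphi:=\varphi_2$ and $\alpha$ into elements of $K[\lambda]$, and since the monic greatest common factor is $\partial-\phi_s$ with $\phi_s=-\varphi_1/\varphi_2$, we obtain
\[
\phi_s=-\frac{\varphi_1}{\varphi_2}=\frac{\mu+\alpha(\lambda)}{\varphi(\lambda)},
\]
which is precisely \eqref{eq-phis}.

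Next, the two easy cases. For $\varphi$: by Theorem~\ref{thm-L1} the operator $\cL_1=\varphi_2\partial+\varphi_1$ is the greatest common factor of $L_s-\lambda$ and $A_{2s+1}-\mu$ in $K(\Gamma_s)[\partial]$ and has order exactly one, so its leading coefficient $\varphi=\varphi_2$ cannot vanish. For $\phi_s$: because $f_s=\mu^2-R_{2s+1}(\lambda)$ is monic of degree two in $\mu$, the ring $K[\lambda,\mu]/(f_s)$ is free over $K[\lambda]$ with basis $\{1,\mu\}$; the numerator $\mu+\alpha(\lambda)$ has $\mu$-coordinate equal to $1$, hence is nonzero in the quotient and so in $K(\Gamma_s)$. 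Together with $\varphi\neq0$ this yields $\phi_s\neq0$.

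The hard part will be $\alpha\neq0$, and for this I would use the hyperelliptic involution $\sigma$ of $\Gamma_s$ that fixes $K(\lambda)$ and sends $\mu\mapsto-\mu$. Since $\partial\lambda=\partial\mu=0$ and $\sigma$ fixes $K$ pointwise, $\sigma$ commutes with $\partial$, so applying $\sigma$ to the Riccati equation \eqref{eq-Ricatti} shows that $\sigma(\phi_s)=(-\mu+\alpha)/\varphi$ is again a solution. If $\alpha=0$ then $\sigma(\phi_s)=-\phi_s$; subtracting the two identities $\partial(\phi_s)+\phi_s^2=u_s-\lambda$ and $-\partial(\phi_s)+\phi_s^2=u_s-\lambda$ forces $\partial(\phi_s)=0$, whence $\partial(\phi_s^2)=0$, and then $\phi_s^2=u_s-\lambda$ gives $u_s'=0$. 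This contradicts the non-constancy of $u_s$, which is built into the notion of KdV level (a constant potential would give $\cH_1=C$, violating Proposition~\ref{prop-Hn}, 1). Hence $\alpha\neq0$. The only points needing care are the commutation of $\sigma$ with $\partial$ and recording that $u_s\notin C$; both are immediate once isolated.
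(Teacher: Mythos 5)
Your proof is correct, and for the part of the lemma that has actual content it goes further than the paper does. The derivation of the formula $\phi_s=(\mu+\alpha)/\varphi$ from Remark~\ref{prop-c1} and the argument that $\phi_s\neq 0$ in $K(\Gamma_s)$ (the numerator $\mu+\alpha$ cannot lie in $(f_s)$ because $f_s$ has degree $2$ in $\mu$) coincide with the paper's proof. Where you diverge is on the nonvanishing of $\alpha$ and $\varphi$: the paper simply asserts these by appealing to Remark~\ref{prop-c1}, which only records that $\alpha$ and $\varphi_2$ are polynomials in $C\{u\}[c^s,\lambda]$ and says nothing about their survival after specializing $u\mapsto u_s$, $c^s\mapsto\bar{c}^s$. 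You instead supply genuine arguments: $\varphi\neq 0$ because the gcd in Theorem~\ref{thm-L1} has order exactly one, and $\alpha\neq 0$ via the involution $\mu\mapsto-\mu$ applied to the Riccati equation \eqref{eq-Ricatti}, which correctly reduces the question to the non-constancy of $u_s$ (consistent with the $s=1$ case, where $\alpha=u'/4$). Two small caveats: the paper only formally declares the standing assumption that $u_s$ is non-constant in the remark after Proposition~\ref{prop-spefact}, i.e.\ \emph{after} this lemma, so you are implicitly importing a hypothesis the paper has not yet stated (though, as you observe, it is already forced by Proposition~\ref{prop-Hn}, since a constant potential makes every $\kdv_n$ vanish and $\cH_1=C$); and your appeal to Theorem~\ref{thm-L1} for ``order exactly one'' leans on a statement whose own proof in the paper only invokes $\cL_0=0$, hence order $\geq 1$ --- the exclusion of order $2$ is only argued later, in Proposition~\ref{prop-spefact}. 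Neither point is an error in your reasoning; both are places where your proof makes explicit what the paper leaves tacit.
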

\begin{proof}
By \eqref{eq-S10},\eqref{eq-S11}, we have that $\varphi_1=det(S_1^0(L_s-\lambda, A_{2s+1}-\mu))$ and $\varphi_2=det(S_1^1(L_s-\lambda, A_{2s+1}-\mu))$.
Now by Remark \ref{prop-c1}, $\varphi_1=-\mu-\alpha$ and $\alpha$, $\varphi=\varphi_2$ are nonzero polynomials in $K[\lambda]$. Observe that $\phi_s=0$ in $K(\Gamma_s)$ if and only if $
\mu+\alpha+(f_s)=0$ in $K[\Gamma_s]$. But this is not possible since $f_s$, which has degree $2$ in $\mu$, is not a factor of $\mu+\alpha$ in $K[\lambda,\mu]$. This proves the last claim.
\end{proof}

To keep notation as simple as possible, we will also write $\phi_s$ to denote the element $\phi_s$ in $K(\Gamma_s)$.
The next algorithm takes as an input a KdV potential to obtain the factor $\partial-\phi_s$ of $L_s-\lambda$ in $K(\Gamma_s)[\partial]$.

\begin{alg}\label{alg-fact} (Factorization)
\begin{itemize}
\item \underline{\sf Given} $u_s$ a KdV potential of KdV level $s$ and given $\bar{c}^s$ the basic constants vector of $u_s$.
\item \underline{\sf Return} the defining polynomial $f_s$ of the spectral curve $\Gamma_s$ and the monic greatest common divisor $\partial-\phi_s$ of $L_s-\lambda$ and $A_{2s+1}-\mu$ in $K(\Gamma_s)[\partial]$.
\end{itemize}
\begin{enumerate}
\item Define $L_s:=-\partial^2+u_s$ and $A_{2s+1}:=\hat{P}_{2s+1}(u_s,\bar{c}^s)$ as in \eqref{eq-AA2n+1}.
\item Compute $f_s=\dres(L_s-\lambda,A_{2s+1}-\mu)$, the defining polynomial of the spectral curve $\Gamma_s$.
\item Compute $\cL_1=\varphi_1+\varphi_2 \partial$, the subresultant of $L_s-\lambda$ and $A_{2s+1}-\mu$ as in Section \ref{subsec-subres}.
\item Define $\phi_s:=-\frac{\varphi_1}{\varphi_2}$.
\item Return $f_s$ and $\partial-\phi_s$.
\end{enumerate}
\end{alg}

\begin{rem}
Observe that we are computing $\phi_s$ in closed form as the quotient of two determinants $-\varphi_1/\varphi_2$, which is a well defined function over the spectral curve. As far as we know, there were no algorithms to obtain the factors $\partial-\phi_s$ of $L_s-\lambda$ over the spectral curve. In \cite{GH}, there
are  some differential recursive expressions for the factorization of $L_s-\lambda_0$ for each  point $(\lambda_0,\mu_0)$ of $\Gamma_s$. Note that our factorization algorithm is defined in $K(\Gamma_s)[\partial]$.
\end{rem}

We would like to obtain a univariate expression of $\phi_s$ using a parametric representation of $\Gamma_s$, whenever it is possible. This will allow us to give a functional representation of $\phi_s$ and as a byproduct, we will obtain a domain of definition of the solutions of $L_s-\lambda$, see Sections \ref{sec-param}, and \ref{sec-specialization}.

\subsection{Factorization for parametrizable spectral curves}\label{sec-param}

Once the factorization problem over $K(\Gamma_s)$ has been solved, in Algorithm \ref{alg-fact}, what remains is to replace $(\lambda,\mu)$ by a parametric representation $(\chi_1(\tau),\chi_2(\tau))$ of $\Gamma_s$. We are not aware of  a previous work where a global treatment of the factorization is achieved. This procedure strongly depends on the genus of the algebraic curve $\Gamma_s$. We summarize next what are the parametrization possibilities (as far as we know) and emphasize on the algorithmic aspects of the process.

\para
A key point to have a one-parameter form factorization algorithm is to obtain a global parametrization of the spectral curve. How complicated is to obtain a global parametrization depends on the genus of the curve. There are algorithms to compute the genus of an algebraic curve \cite{SWP}. In the case of rational curves there are algorithms to obtain a global parametrization \cite{SWP}. For elliptic curves we can define a meromorphic parametrization by means of the Weierstrass $\wp$-function. For all other cases, as far as we know, there are no algorithms to obtain a global parametrization. An affine algebraic curve $\Gamma$ admits at any point $P\in\Gamma$ a local parametrization in the field of Puiseux series, see for instance \cite{SWP}, Section 2.5 but in this paper we would like to talk only about the global treatment of the curve.

If an affine algebraic curve $\Gamma$ in $\bbC^2$ is rational (has genus zero) then $\Gamma$ can be parametrized by rational functions.
Let $\aleph(\tau)=\left(\chi_1(\tau),\chi_2(\tau)\right)$ in $C(\tau)^2$ be a (global) parametrization of $\Gamma$, that is:
\begin{enumerate}
\item For all $\tau_0\in C$, but a finite number of exceptions, the point $(\chi_1(\tau_0),\chi_2(\tau_0))$ is on $\Gamma$, and
\item for all $(\lambda_0,\mu_0)\in \Gamma$, but a finite number of exceptions, there exists $\tau_0\in C$ such that $(\lambda_0,\mu_0)=(\chi_1(\tau_0),\chi_2(\tau_0))$.
\end{enumerate}
A rational parametrization $\aleph (\tau)$ of $\Gamma$ gives an isomorphism from $C(\Gamma)$ to the field of rational functions $C(\tau)$, see \cite{SWP}, Section 4.1.
This can be extended to an isomorphism $K(\Gamma)\simeq K(\tau)$. More over, $K(\tau)$ is isomorphic to the fraction field $\cF=K(\chi_1(\tau),\chi_2(\tau))$ of the polynomial ring $K[\chi_1(\tau),\chi_2(\tau)]$. Since $\tau$ is an algebraic indeterminate over $K$, by condition 2, it is natural to assume that $\partial(\chi_1(\tau))=0$ and $\partial(\chi_2(\tau))=0$, which allows to extend the derivation $\partial$ of $K$ to have a differential field $(\cF,\partial)$.

We define $\tilde{\phi}_s:=\rho(\phi_s)$. Observe that $\tilde{\phi}_s$ is a nonzero element of $\cF$ since by Lemma \ref{lem-phi} $\phi_s$ is nonzero in $K(\Gamma_s)$.
We have naturally an isomorphism $\rho$ between the rings of differential operators ${\varrho}: K(\Gamma_s)[\partial]\longrightarrow\cF[\partial]$ as follows:
\[{\varrho}\left(\sum_{j}a_j\partial^j\right)=\sum_{j}\rho(a_j)\partial^j.\]
For instance ${\varrho}(L_s-\lambda)= L_s-\chi_1(\tau)$ and ${\varrho}(\partial-\phi_s)=\partial- \tilde{\phi}_s$.
Furthermore, since the isomorphism respects the ring structure, we have
\[L_s-\chi_1(\tau)=(-\partial-\tilde{\phi}_s)(\partial-\tilde{\phi}_s)\]
where $\tilde{\phi}_s$ is a solution of the Ricatti equation $\partial(\phi)+\phi^2=u_s-\chi_s(\tau)$, since $\rho$ respects the differential field structure.

\subsection{Factors at smooth points of  $\Gamma_s$ }\label{sec-specialization}
So far in this paper $\lambda$ and $\mu$ were algebraic variables over $K$, furthermore $\partial \lambda=0$ and $\partial \mu=0$. In this section we will talk about the specialization process of $(\lambda,\mu)$ to a point $P_0=(\lambda_0,\mu_0)$ of the spectral curve $\Gamma_s$. In this manner we recover the classical factorization problem of $L_s-\lambda_0$ as an operator in $K[\partial]$, see for instance \cite{GH}, \cite{AMW}.

\begin{prop}\label{prop-spefact}
Given $P_0=(\lambda_0,\mu_0)$ in $\Gamma_s$ the differential operators $L_s-\lambda_0$ and $A_{2s+1}-\mu_0$ have a common factor over $K$. Furthermore
\begin{equation}\label{eq-Lphi0}
L_s-\lambda_0=(-\partial-\phi_0)(\partial-\phi_0)
\end{equation}
where $\phi_0=\phi_s(P_0)$ with $\phi_s$ as in \eqref{eq-phis} and
\begin{equation}\label{eq-phi0}
\phi_0=-\frac{\varphi_1(P_0)}{\varphi_2(P_0)}=\frac{\mu_0+\alpha(\lambda_0)}{\varphi_2(\lambda_0)}
\end{equation}
with $\varphi_2(\lambda_0)\neq 0$.
\end{prop}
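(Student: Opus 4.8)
The plan is to obtain the factorization at the point $P_0$ as a specialization of the factorization already established over $K(\Gamma_s)$ in Theorem \ref{thm-L1} and Lemma \ref{lem-phi}. First I would invoke Lemma \ref{lem-phi}, which gives $\phi_s=(\mu+\alpha(\lambda))/\varphi(\lambda)$ with $\alpha,\varphi=\varphi_2$ nonzero in $K[\lambda]$. The key subtlety is that $\phi_s$ is an element of $K(\Gamma_s)$, so before evaluating at $P_0$ I must guarantee that the denominator $\varphi_2(\lambda)$ does not vanish at $P_0$; that is, I need $\varphi_2(\lambda_0)\neq 0$. This is exactly the hypothesis hidden in the word \emph{smooth} in the section title, and I expect verifying it to be the main obstacle, as discussed below.

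Granting $\varphi_2(\lambda_0)\neq 0$, the evaluation map sending $(\lambda,\mu)\mapsto(\lambda_0,\mu_0)$ is a well-defined $C$-algebra homomorphism on the local ring of $\Gamma_s$ at $P_0$, and $\phi_s$ lies in that local ring. Applying this specialization to the operator identity
\[
L_s-\lambda=(-\partial-\phi_s)(\partial-\phi_s)\quad\text{in }K(\Gamma_s)[\partial]
\]
from Theorem \ref{thm-L1}, and noting that $\partial\lambda=\partial\mu=0$ means the derivation is compatible with setting $\lambda=\lambda_0$, $\mu=\mu_0$ (these are constants in $K$ as well), I obtain the desired identity \eqref{eq-Lphi0} with $\phi_0=\phi_s(P_0)=-\varphi_1(P_0)/\varphi_2(P_0)$. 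Since $\varphi_1=-\mu-\alpha$ by Remark \ref{prop-c1}, this is precisely $(\mu_0+\alpha(\lambda_0))/\varphi_2(\lambda_0)$, giving \eqref{eq-phi0}. The common factor over $K$ is then $\partial-\phi_0\in K[\partial]$, proving the first assertion.

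The crux is therefore establishing $\varphi_2(\lambda_0)\neq 0$ at a \emph{smooth} point $P_0$ of $\Gamma_s$. I would argue by contradiction: suppose $\varphi_2(\lambda_0)=0$. Comparing \eqref{eq-S11} with the structure of $\varphi_2=\det(S_1^1(L_s-\lambda,A_{2s+1}-\mu))$ from Remark \ref{prop-c1}, $\varphi_2$ is a polynomial in $\lambda$ alone, so its vanishing forces $\lambda_0$ to be a root of $\varphi_2$. The plan is to relate the zeros of $\varphi_2$ to the ramification/branch points of the double cover $\Gamma_s\to\mathbb{C}$, $(\lambda,\mu)\mapsto\lambda$ defined by $f_s=-\mu^2-R_{2s+1}(\lambda)$; at such branch points $\mu_0=0$ and the curve has vertical tangent, which are exactly the points where the factorization degenerates. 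I expect that $\varphi_2(\lambda_0)=0$ together with $f_s(\lambda_0,\mu_0)=0$ forces $P_0$ to be one of the $2s+1$ Weierstrass-type points where $R_{2s+1}(\lambda_0)=0$, and these are precisely the singular or ramification points excluded by the smoothness hypothesis. Making this precise—showing that the resultant-theoretic denominator $\varphi_2$ can only vanish at the branch locus of the spectral curve—is the technical heart of the argument and the step I would spend the most care on.
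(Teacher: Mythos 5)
Your reduction of the factorization to the single claim $\varphi_2(\lambda_0)\neq 0$ is sound, and the specialization mechanics (evaluating the identity of Theorem \ref{thm-L1} at $P_0$, using $\partial\lambda=\partial\mu=0$) are fine. But the step you yourself flag as the technical heart is exactly the one you do not prove, and the strategy you sketch for it would not work. First, the proposition carries no smoothness or non-branch-point hypothesis: it is stated for \emph{every} $P_0\in\Gamma_s$, and the factorization \eqref{eq-Lphi0} must hold (and does hold) at points with $\mu_0=0$ as well --- for instance, for the rational potentials the computed $\varphi_2$ is visibly a nonzero element of $K$ at $\lambda_0=0$. An argument that excludes the branch locus therefore proves less than the statement. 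Second, the expected implication ``$\varphi_2(\lambda_0)=0$ forces $R_{2s+1}(\lambda_0)=0$'' is neither established nor the right mechanism: $\varphi_2(\lambda_0)$ is an element of $K$ (a function of $x$), not a number, and nothing in your outline connects its vanishing in $K$ to the geometry of the double cover $(\lambda,\mu)\mapsto\lambda$.

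The paper closes this gap purely algebraically. Since $\cL_0=\dres(L_s-\lambda_0,A_{2s+1}-\mu_0)=f_s(\lambda_0,\mu_0)=0$, Proposition \ref{prop-Propertiesdres} already gives the common factor, and by Theorem \ref{thm-subres} the gcd is nontrivial. If $\cL_1$ were the zero operator, the gcd would have order $2$, i.e.\ it would be $L_s-\lambda_0$ itself, so $A_{2s+1}-\mu_0=Q(L_s-\lambda_0)$ for some $Q\in K[\partial]$ of order $2s-1$. Commuting with $L_s$ yields $[Q,L_s]L_s=0$, hence $[Q,L_s]=0$, which places an operator of order $2s-1<2s+1$ in $\cC_K(L_s)$ and contradicts Theorem \ref{thm-centralizer}. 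Therefore $\cL_1$ has order one, which is precisely $\varphi_2(\lambda_0)\neq 0$, at every point of $\Gamma_s$. You should replace the branch-point analysis with this centralizer argument (or something equivalent); as written, your proof is incomplete at its central step.
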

\begin{proof}
By Proposition \ref{prop-Propertiesdres}, 2, the differential operators $L_s-\lambda_0$ and $A_{2s+1}-\mu_0$ in $K[\partial]$ have a common factor since
\[\dres(L_s-\lambda_0,A_{2s+1}-\mu_0)=f_s (\lambda_0,\mu_0)=0.\]
With the notation of Lemma \ref{lem-phi}, observe that $\varphi_1(P_0)+\varphi_2(P_0) \partial$ is the subresultant of $L_s-\lambda_0$ and $A_{2s+1}-\mu_0$ as in Section \ref{subsec-subres}. We will prove next that $\cL_1$ is an operator of order one and then by Theorem \ref{thm-subres}, we have the factorization
\[L_s-\lambda_0=(-\partial-\phi_0)(\partial-\phi_0),\]
where $\phi_0=\phi_s(P_0)$ and the given formula follows by Lemma \ref{lem-phi}.

Let us suppose that $\cL_1$ is the zero operator. Then the second subresultant $\cL_2$ equals to $L_s-\lambda_0$. Hence $L_s-\lambda_0$ is a factor of $A_{2s+1}-\mu_0$. That is
\[A_{2s+1}-\mu_0=Q(L_s-\lambda_0)\]
for some monic differential operator $Q$ of order $2s-1$ in $K[\partial]$.
Computing the commutator with $L_s$ we obtain
\[0=[A_{2s+1}-\mu_0,L_s]=[QL_s,L_s]-[\lambda_0,L_s]=[Q,L_s]L_s.\]
Since $K[\partial]$ is a domain $[Q,L_s]=0$ and $Q$ belong to the centralizer of $L_s$ in $K[\partial]$, which contradicts Theorem \ref{thm-centralizer} since $Q$ has even order less than $2s+1$. We have proved that  $\cL_1$ is an operator of order one, in other words $\varphi_2(\lambda_0)\neq 0$.
\end{proof}

\begin{rem}
Observe that if $\phi_0=0$ then, due to the Ricatti equation, $u_s$ is the constant potential $\lambda_0$, and conversely. From now on we will assume that $u_s$ is not a constant potential.
\end{rem}

We must distinguish two different types of points in the curve, the ones with $\mu_0\neq 0$ and those with $\mu_0= 0$, that is
the finite set
\[Z_s=\Gamma_s\cap (C\times\{0\})=\{(\lambda,0)\mid R_{2s+1}(\lambda)=0\}.\]
Observe that $Z_s$ contains all the affine singular points of $\Gamma_s$.

For a given point $P_0=(\lambda_0,\mu_0)\in C^2$ of the curve $\Gamma_s$, we will assume  $\mu_0\neq 0$ from now on.
Let us consider $\phi_0$ as in \eqref{eq-phi0}, in this section we will use the following notation
\begin{equation}\label{eq-phi+0}
\phi_{0+}=\phi_0=\frac{\mu_0+\alpha(\lambda_0)}{\varphi_2(\lambda_0)}\mbox{ and }\phi_{0-}=\frac{-\mu_0+\alpha(\lambda_0)}{\varphi_2(\lambda_0)},
\end{equation}
pointing out that $\phi_{0+}\neq \phi_{0-}$ since $\mu_0\neq 0$.
Applying Proposition \ref{prop-spefact} to the point  $(\lambda_0,-\mu_0)$ we obtain the following factorization of $L_s-\lambda_0$
\[L_s-\lambda_0=(-\partial-\phi_{0-})(\partial-\phi_{0-}).\]

Let us consider nonzero solutions $\Psi_{0+}$ and $\Psi_{0-}$ respectively of the differential equations
\begin{equation}\label{eq-Psi}
\partial(\Psi)=\phi_{0+}\Psi\mbox{ and }\partial(\Psi)=\phi_{0-}\Psi.
\end{equation}
Then the equality
\[\frac{w(\Psi_{0+},\Psi_{0-})}{\Psi_{0+}\Psi_{0-}}=\phi_{0+}-\phi_{0-}=
\frac{2}{\varphi_2(\lambda_0)}\mu_0\neq 0.\]
implies that $W_0=w(\Psi_{0+},\Psi_{0-})\neq 0$ in $C$. Therefore $\{\Psi_{0+},\Psi_{0-}\}$ is a fundamental set of solutions of $(L_s-\lambda_0)(\Psi)=0$. Moreover
\[\Psi_{0+}\Psi_{0-}=\frac{\varphi_2(\lambda_0)W_0}{2\mu_0}\in K,\]
hence
\begin{equation}\label{eq-PV0}
K\langle \Psi_{0+},\Psi_{0-}\rangle=K\langle \Psi_{0+}\rangle.
\end{equation}

In the next section we will show by means of examples the type of factors that may appear depending on the type of curve. Even at each smooth point of the spectral curve the field $K\langle \Psi_{0+}\rangle $ can be very complicated. These situations deserve a more detailed study that we will present in a future work.

\section{Schr\"odinger operators for KdV solitons. Computed examples }\label{sec-Ejemplos}

Our algorithms \ref{alg-constants} ,for computation of constants, and \ref{alg-fact}, for the factorization of the Schr\"odinger operator, are now ready to be implemented with any symbolic computation software, we did it in Maple 18. We will illustrate their performance by means of three well known families of potentials in \cite{Veselov}. The first one is a family of rational potentials, the second one is a family of Rosen-Morse potentials and both are degenerate cases of a third family of hyperelliptic potentials.

\begin{equation*}
\begin{array}{c|c|c}\label{tab:Solitons}
       \text{Rational}  &   \text{Rosen-Morse} &  \text{Elliptic}\\
       & & \\
        u_s = \dfrac{s(s+1)}{x^2}  & u_s =  \dfrac{-s(s+1)}{\cosh^2(x)} & u_s = s(s+1)\wp(x;g_2,g_3)
\end{array}
\end{equation*}
We will factor $L_s-\lambda$, with $L_s=-\partial^2+u_s$, as an operator in $K(\Gamma_s)[\partial]$, where $\Gamma_s$ is the spectral curve of $L_s$.

\subsection{Rational KdV solitons}

Let us consider the family of rational potentials ${u_s}=s(s+1)/x^2$, $s\geq 1$, in $K=\bbC(x)$ with $\partial=d/dx$. It is well known that the KdV level of $u_s$ is $s$ and its basic constants vector $\bar{c}^s=(0,\ldots ,0)$, we checked this result using Algorithm \ref{alg-constants}.

The spectral curve $\Gamma_s$ is defined by the polynomial $f_s=\mu^2+\lambda^{2s+1}$. We computed the factor $\partial-\phi_s$ of $L_s-\lambda$ in $K(\Gamma_s)[\partial]$ using Algorithm \ref{alg-fact}. For $s=1,2,3$ the results obtained coincides with the ones in \cite{GH}, Example 1.30. We show our result for the next level $s=4$:
\[
\phi_4=\displaystyle{-{\frac {-\mu\,{x}^{9}+10\,{\lambda}^{3}{x}^{6}+270\,{\lambda}^{2}{x}^{4}+4725\,\lambda\,{x}^{2}+44100}{x \left( {\lambda}^{4}{x}^{8}+10\,{\lambda}^{3}{x}^{6}+135\,{\lambda}^{2}{x}^{4}+1575\,\lambda\,{x}^{2}+11025 \right)}}}.
\]
Then, we obtain the factorization:
\[L_4-\lambda=(-\partial-{\phi}_4)(\partial-{\phi}_4)\]
in $K(\Gamma_4 )[\partial]$ where $K(\Gamma_4 )$ is the fraction field of  the domain $K[\lambda,\mu]/(\mu^2+\lambda^{9})$.

Next we observe that the curves $\Gamma_s$ have all genus zero and a global parametrization is $$\aleph_s(\tau)=(\chi_1(\tau),\chi_2(\tau))= \left(-\tau^2,-\tau^{2s+1}\right).$$ Following Section \ref{sec-param},
the one-parameter form of the factor
$ \partial-\tilde{\phi}_4 $
of $$L_4-\chi_1(\tau)=-\partial^2 +\dfrac{20}{x^2 } +\tau^2$$
is given by
\[
\tilde{\phi}_4 (x,\tau )=\displaystyle{-{\frac {{\tau}^{9}{x}^{9}-10\,{\tau}^{6}{x}^{6}+270\,{\tau}^{4}{x}^{4}
\mbox{}-4725\,{\tau}^{2}{x}^{2}+44100}{x \left( {\tau}^{8}{x}^{8}-10\,{\tau}^{6}{x}^{6}+135\,{\tau}^{4}{x}^{4}-1575\,{\tau}^{2}{x}^{2}+11025 \right) }}}.
\]
Then, we obtain the global factorization:
\[L_4-\chi_1(\tau)=(-\partial-\tilde{\phi}_4)(\partial-\tilde{\phi}_4)\]
in $K(\tau)[\partial]=\bbC (x ,\tau )[\partial]$.
A factorization using a global parametrization of the spectral curves is our main contribution to the study of this family of potentials.

\subsection{Rosen-Morse KdV solitons}\label{sec-RM}

Let us consider the family of Rosen-Morse potentials  $u_s=\frac{-s(s+1)}{\cosh^2(x)}$, $s\geq 1$, in the differential field $K=\bbC(e^x)=\bbC\langle\cosh(x)\rangle$ with $\partial=d/dx$.

\para

We show how to obtain the basic constants vector $\bar{c}^s$ for level $s=3$ using Algorithm \ref{alg-constants}. We observe that $u_s$ belongs to $C(\eta)$ with  $\eta=\cosh(x)$ and that $(\eta')^2=\eta^2-1$, thus the hypothesis of the algorithm hold.
For the first three iterations of the algorithm, the system $\cS_n$, $n=0,1,2$ has no solution. In fact, we have
\begin{align*}
&\KdV_0(u_3)=-24\frac{\eta'}{\eta^3},\,\,\,\KdV_1(u_3,c^1)=-24 \frac{\eta'}{\eta^5} \left(-15+ (-1+c_1)\eta^2\right),\\
&\KdV_2(u_3,c^2)=12 \frac{\eta'}{\eta^7} \left(-225+ (30 c_1-150)\eta^2+(-2c_2+2c_1-2)\eta^4\right).
\end{align*}
Thus $\KdV_0(u_3)\neq 0$ and $\KdV_n(u_3,\bar{c}^n)\neq 0$ for all $\bar{c}^n\in\bbC^n$, $n=1,2$.
For $n=3$ we obtain
\begin{align*}
\KdV_3(u_3,c^3)=&\eta'\frac{p_3(\eta)}{q_3(\eta)}\\
=&-12 \frac{\eta'}{\eta^7} \left(-3150+225c_1+ (150 c_1-630-30c_2)\eta^2+(-2c_2+2c_1+2c_3-2)\eta^4\right).
\end{align*}
From the coefficients in $\eta$ of $p_3(\eta)$ we obtain the triangular system
\[\cS_3=\{-3150+225c_1=0,\,\,\,  150 c_1-630-30c_2=0,\,\,\, -2c_2+2c_1+2c_3-2=0\}.\]
The unique solution of this system is the basic constant vector $\bar{c}^3=(14, 49, 36)$. Then, $u_3$ is a solution of the differential equation
\begin{equation*}
\KdV_3(u,\bar{c}^3)=\kdv_3+14 \kdv_{2}+ 49 \kdv_1 +36 \kdv_0 =0.
\end{equation*}

\para

The defining polynomial $f_s$ of $\Gamma_s$ is known to be equal to $f_s=\mu^2+\lambda^2\prod_{\kappa=1}^s(\lambda+\kappa^2)^2$, see for instance \cite{GH}, Example 1.31. We checked these results using our implementation of the differential resultant $\dres(L_s-\lambda, A_{2s+1}-\mu)$.

\para

The next table shows the level $s$, the basic constant vector $\bar{c}^s$, computed with Algorithm \ref{alg-constants}, and the computation of the factor $\partial-\phi_s$ using the Factorization Algorithm \ref{alg-fact} for the operator $L_s -\lambda$ in $K(\Gamma_s)[\partial]$:
\[
\begin{array}{ccc}
s& \bar{c}^s & \phi_s\\
1& (1)  & \displaystyle{\frac{\mu\cosh(x)^3+\sinh(x)}{\cosh(x)(\lambda\cosh(x)^2+\cosh(x)^2-1)}}\\
2& (5,4) &
\displaystyle{\frac{\mu\cosh(x)^5+3\cosh(x)^2\sinh(x)\lambda+12\sinh(x)\cosh(x)^2-18\sinh(x)}
{(\cosh(x)^4\lambda^2+5\cosh(x)^4\lambda+4\cosh(x)^4-3\lambda\cosh(x)^2-12\cosh(x)^2+9)\cosh(x)}}\\
3& (14,49,36) &
\displaystyle{
\frac{\mu+\alpha(\lambda)}{\varphi(\lambda)}}
\end{array}
\]
with
\begin{align*}
&\alpha=\frac{6\cosh(x)^4\sinh(x)\lambda^2+78\cosh(x)^4\sinh(x)\lambda-90\cosh(x)^2\sinh(x)\lambda+a}{\cosh(x)^7},\\
&a=27\sinh(x)(8\cosh(x)^4-30\cosh(x)^2+25),\\
&\varphi=\frac{\cosh(x)^6\lambda^3+14\cosh(x)^6\lambda^2+49\cosh(x)^6\lambda-6\cosh(x)^4\lambda^2-78\cosh(x)^4\lambda+45\cosh(x)^2\lambda+b}{\cosh(x)^6},\\
&b=9\sinh(x)^2(4\cosh(x)^4-20\cosh(x)^2+25).
\end{align*}

\para

All the curves $\Gamma_s$ for this family are rational, in particular they admit a polynomial global parametrization

\begin{equation}\label{eq-param-RM}
\aleph_s(\tau)=(\chi_1(\tau),\chi_2(\tau))=\left(-\tau^2 ,-\tau\prod_{\kappa=1}^s(\tau^2-\kappa^2)
\right).
\end{equation}
The next table shows $\tilde{\phi}_s$:

\[
\begin{array}{cc}
s&  \tilde{\phi}_s\\
1&   \displaystyle{{\frac { \left( {\tau}^{2}-\tau \right) {w}^{2}+ \left( 2\,{\tau}^{2}-4 \right) w+{\tau}^{2}+\tau}{ \left(  \left( \tau-1 \right) w+\tau+1 \right)  \left( w+1 \right)}}}\\
2&  \displaystyle{{\frac { a_3(\tau) {w}^{3}+ b_2(\tau) {w}^{2}+ a_1(\tau) w+a_0(\tau)}{ \left(b_2(\tau) {w}^{2}+ b_1(\tau) w+b_0(\tau)\right) \left( w+1 \right)}}}\\
3&
\displaystyle{{\frac { c_4(\tau) {w}^{4}+ c_3(\tau) {w}^{3}+ c_2(\tau) {w}^{2}+
 c_1(\tau) w+c_0(\tau)}{ \left(  d_3(\tau) {w}^{3}+ d_2(\tau) {w}^{2}+ d_1(\tau) w+d_0(\tau)\right)  \left( w+1 \right) }} }
\end{array}
\]
where $w=e^{2x}$,
\begin{align*}
&a_3=-{\tau}^{3}-3\,{\tau}^{2}-2\,\tau,\,\,\,
a_2=-3\,{\tau}^{3}-3\,{\tau}^{2}+18\,\tau+24,\,\,\,
a_1=-3\,{\tau}^{3}+3\,{\tau}^{2}+18\,\tau-24,\\
&a_0=-{\tau}^{3}+3\,{\tau}^{2}-2\,\tau,\,\,\,
b_2={\tau}^{2}+3\,\tau+2,\,\,\,
b_1= 2\,{\tau}^{2}-8 ,\,\,\,
b_0={\tau}^{2}-3\,\tau+2.
\end{align*}
and
\begin{align*}
&c_4= {\tau}^{4}-6\,{\tau}^{3}+11\,{\tau}^{2}-6\,\tau,\,\,\,
c_3= 4\,{\tau}^{4}-12\,{\tau}^{3}-40\,{\tau}^{2}+168\,\tau-144,\,\,\,
c_2=6\,{\tau}^{4}-102\,{\tau}^{2}+432,\\
&c_1=4\,{\tau}^{4}+12\,{\tau}^{3}-40\,{\tau}^{2}-168\,\tau-144,\,\,\,
c_0={\tau}^{4}+6\,{\tau}^{3}+11\,{\tau}^{2}+6\,\tau,\,\,\,
d_3={\tau}^{3}-6\,{\tau}^{2}+11\,\tau-6,\\
&d_2=3\,{\tau}^{3}-6\,{\tau}^{2}-27\,\tau+54 ,\,\,\,
d_1=3\,{\tau}^{3}+6\,{\tau}^{2}-27\,\tau-54 ,\,\,\,
d_0={\tau}^{3}+6\,{\tau}^{2}+11\,\tau+6 .
\end{align*}
Hence, we obtain the global factorization by means of the global parametrization \eqref{eq-param-RM}:
\[L_s-\chi_1(\tau)=(-\partial-\tilde{\phi}_s)(\partial-\tilde{\phi}_s)\]
in $K(\tau)[\partial]=\bbC (e^x ,\tau )[\partial]$. These factorizations, using a global parametrization of the spectral curves for this family of potentials, are new as far as we know.

\subsection{Elliptic and Hyperelliptic KdV solitons}\label{sec-Elliptic}

Next we consider the family of elliptic potentials $u_s=s(s+1)\wp(x;g_2,g_3)$, $s\geq 1$, where $\wp$ is the Weierstrass $\wp$-function for $g_2$, $g_3$, satisfying $(\wp')^2=4\wp^3-g_2 \wp-g_3$. In this case $K=\bbC\langle\wp\rangle=\bbC(\wp,\wp')$ with $\partial=d/dx$.

\para

The requirements of Algorithm \ref{alg-constants} are satisfied since $u_s$ belongs to $\bbC(\eta)$ for $\eta=\wp$ and $(\eta')^2=4\eta^3-g_2 \eta-g_3\in C(\eta)$. Thus we used Algorithm \ref{alg-constants} to compute $\bar{c}^s$. For $s=1,2$ we could check that the results obtained coincide with the ones in \cite{GH}, Example 1.32.

Next, we show our computations for $s=3$.
Using Algorithm \ref{alg-constants}, we checked that $\KdV_0(u_3)\neq 0$ and $\KdV_n(u_3,\bar{c}^n)\neq 0$ for all $\bar{c}^n\in\bbC^n$, $n=1,2$. From
\begin{align*}
\KdV_3(u_3,c^3)=&\eta'\frac{p_3(\eta)}{q_3(\eta)}\\
=&\eta' \left((-5670g_2-360c_2)\eta^2-2700c_1\eta -153g_2c_1-1782g_3-24c_3\right).
\end{align*}
we obtain the triangular linear system in $c_1$, $c_2$ and $c_3$
\[\cS_3=\{-5670g_2-360c_2=0,\,\,\, 2700c_1=0,\,\,\, -153g_2c_1-1782g_3-24c_3=0\},\]
whose unique solution is $\bar{c}^3=(0, -63g_2/4, -297g_3/4)$. Then, $u_3$ is a solution of the differential equation

\begin{equation*}
\KdV_3(u,\bar{c}^3)=\kdv_3 -\dfrac{63g_2}{4 }\kdv_1 -\dfrac{297g_3 }{4 }\kdv_0 =0.
\end{equation*}

\para

Then, we compute the defining polynomial $f_s$ of $\Gamma_s$ with our implementation of the differential resultant $\dres(L_s-\lambda, A_{2s+1}-\mu)$. Here we obtain the polynomial  $f_3(\lambda,\mu)=\mu^2+R_7(\lambda)$ where
$$R_7=\frac{1}{16}\lambda(-16\lambda^6+504 g_2\lambda^4+2376 g_3\lambda^3-4185 g_2^2\lambda^2+3375 g_2^3-36450 g_2 g_3\lambda-91125 g_3^2).$$

\para

Using Algorithm \ref{alg-fact},
we computed the factor $\partial-\phi_s$ of the operator $L_s -\lambda$ in $K(\Gamma_s)[\partial]$. For $s=1,2$ the results coincide with the ones obtained in \cite{GH}, Example 1.32. We show here
\[
\phi_3= \displaystyle{\frac{\mu+\wp'\left(\frac{675}{2}\wp^2-\frac{225}{8}g_2+45\wp\lambda+3\lambda^2\right)}{\lambda^3+6\wp\lambda^2+(45\wp^2-15g_2)\lambda
-225\wp'^2}}
\]
where $\wp$ and $\wp'$ denote $\wp(x;g_2,g_3)$ and $\wp'(x,g_2,g_3)$ respectively. Then, we obtain the factorization:
\[L_3-\lambda=(-\partial-{\phi}_3)(\partial-{\phi}_3)\]
in $K(\Gamma_3 )[\partial]$ where $K(\Gamma_3 )$ is the fraction field of  the domain $K[\lambda,\mu]/(\mu^2+R_7 (\lambda))$.

\para

It is well known that the curves $\Gamma_s$ for this family are not rational, they have genus $s$.  In the case of the elliptic potential $u_1=2\wp(x;g2,g3)$ one can easily prove that $\aleph_1(\tau)=\left(-\wp(\tau),\frac{1}{2}\wp'(\tau)\right)$ is a global parametrization of the spectral curve $\Gamma_1$ whose defining polynomial is the irreducible polynomial $f_1=-\mu^2-\lambda^3+(1/4)g_2\lambda -(1/4)g_3$ . In this case
\[ \tilde{\phi}_1=\displaystyle{\frac{\frac{-1}{2}(\wp'(x)-\wp'(\tau))}{\wp(x)-\wp(\tau)}}.\]
Hence, we obtain the global factorization by means of the given global parametrization:
\[L_1+\wp(\tau)=(-\partial-\tilde{\phi}_1)(\partial-\tilde{\phi}_1)\]
in $\bbC \langle \wp (x)  ,\wp (\tau ) \rangle[\partial]$. For $s\geq 2$, as far as we know there are no effective algorithms to compute a global parametrization $(\chi_1(\tau),\chi_2(\tau))$ of $\Gamma_s$. This is a difficult open problem. Some contributions have been made in this direction, for instance by Y.V. Brezhnev in \cite{Brez2}.

\bigskip

\noindent {\sf Acknowledgments:} We kindly thank all members of the Integrability Madrid Seminar for many fruitful discussions: J. Capit\'an, R. Hern\'andez Heredero, S. Jim\'enez, A. P\' erez-Raposo, J. Rojo Montijano and R. S\'anchez; and the members in Colombia: P.B. Acosta-Hum\' anez and D. Bl\' azquez-Sanz. In particular: to P.B. Acosta-Hum\' anez for his hospitality and enlighting discussions during the visit of the first two authors to Universidad del Atl\' antico, Barranquilla, Colombia, 2014; to R. Hern\'andez Heredero for showing us the importance of the recursion operator and the proof of Lemma \ref{lem-kdv}; and to A. P\' erez-Raposo for carefully proof reading this manuscript. We also thank A. Mironov and A.P. Veselov for stimulating discussion on this kind of problems.

\para

The first two authors are members of the Research Group \textquotedblright Modelos matem\' aticos no lineales\textquotedblright , UPM and S.L. Rueda has been
partially supported by the \textquotedblright Ministerio de Econom\'\i a y Competitividad\textquotedblright  under the project MTM2014-54141-P. M.A. Zurro is partially supported by Grupo UCM 910444.

\section*{References}

\end{document}